\documentclass[letterpaper, 10 pt, conference]{ieeeconf}  

\IEEEoverridecommandlockouts                              

\usepackage{graphics} 
\usepackage{epsfig} 
\usepackage{mathptmx} 
\usepackage{mathtools} %
\usepackage{booktabs}
\usepackage{array}
\usepackage{xcolor}

\usepackage{amsthm}
\usepackage{graphics} 
\usepackage{epsfig} 
\usepackage{times} 
\usepackage{amsmath} 
\usepackage{amssymb} 
\usepackage{cite}
\usepackage{hyperref}
\usepackage{dsfont}
\usepackage{enumerate}
\usepackage{bbm}
\usepackage{algorithm}
\usepackage{algpseudocode}

\theoremstyle{plain}
\newtheorem{theorem}{Theorem}

\newtheorem{lemma}{Lemma}
\newtheorem{proposition}{Proposition}

\theoremstyle{definition}
\newtheorem{definition}{Definition}

\title{\LARGE \bf
Associative Memory System via Threshold Linear Networks
}

\author{Qin (Eric) He, Jing Shuang (Lisa) Li
\thanks{The authors are with the Department of Electrical Engineering and Computer Science, University of Michigan.  
 {\tt\small \{ericheq@umich.edu, jslisali\}@umich.edu}}
}

\begin{document}
\maketitle

\begin{abstract}

Humans learn and form memories in stochastic environments. Auto-associative memory systems model these processes by storing patterns and later recovering them from corrupted versions. Here, memories are learned by associating each pattern with an attractor in a latent space. After learning, when (possibly corrupted) patterns are presented to the system, latent dynamics facilitate retrieval of the appropriate uncorrupted pattern. In this work, we propose a novel online auto-associative memory system. In contrast to existing works, our system  supports sequential memory formation and provides formal guarantees of robust memory retrieval via region-of-attraction analysis. We use a threshold-linear network as latent space dynamics in combination with an encoder, decoder, and controller. We show in simulation that the memory system successfully reconstructs patterns from corrupted inputs.
\end{abstract}

\section{Introduction} \label{sec:intro}
An auto-associative memory system stores patterns and later reconstructs them from partial or noisy versions of those same patterns (i.e., pattern completion \cite{hopfield1982neural}). 
For example, under poor lighting conditions, the human brain can reconstruct a familiar cat well enough for recognition. 
Computational neuroscientists and computer scientists often model such systems using a dynamical latent space. During learning, patterns are associated with fixed-point attractors in latent space through an encoder and decoder; during inference, the memory system retrieves the appropriate pattern by leveraging latent dynamics. 
Previous work on associative memory systems includes energy-based models such as Hopfield networks \cite{hopfield1982neural, krotov2023new} and neural network-related models \cite{ramsauer2020hopfield}. These frameworks provide principled mechanisms for associative memory, with applications in behavioral modeling \cite{he2025adaptive} and machine learning \cite{krotov2025modern}.
However, classical Hopfield networks can have spurious attractors \cite{hertz2018introduction}
, while neural networks lack formal guarantees for robust pattern retrieval. Moreover, most existing work requires full prior knowledge of stored patterns and does not support online learning as humans do.

We propose a novel online auto-associative memory system with robustness guarantees. 
The system consists of a threshold-linear network (TLN) latent space (section \ref{sec:TLN}), a controller (section \ref{subsec:controller}), an encoder, and a decoder (section \ref{subsec:learning}). 
Figure~\ref{fig:TLN_arch} shows the full architecture. The encoder maps patterns to the latent space. Latent dynamics are modulated by the controller for memory formation and retrieval. The decoder then maps latent states back to pattern space. 

During learning (Fig.~\ref{fig:TLN_arch}A), we sequentially present the system with new patterns to memorize (i.e., online learning). When a new pattern is presented, the decoder output does not match this pattern; this mismatch triggers the controller and induces a transition to a new attractor. Once the latent dynamics converge to the new attractor, the encoder and decoder are updated to match the pattern to the attractor.

During inference (Fig.~\ref{fig:TLN_arch}B), noisy patterns are presented to the system sequentially. Each noisy pattern is mapped through the encoder to a target state. The controller then drives the latent dynamics to the target state. When this target state lies in the region of attraction (ROA) of the relevant attractor, the decoder is guaranteed to reconstruct the correct pattern from that attractor. We rigorously characterize the ROA of the latent dynamics and use this to provide robustness guarantees (section \ref{sec:ROA_forward}); this is, to our knowledge, the first such characterization for TLN dynamics. We then validate our analyses in simulation (section \ref{sec:simulations}).

\textbf{Notation:} Let $\mathbb{S}^n \subseteq \mathbb{R}^{n \times n}$ denote the set of symmetric matrices. For $M \in \mathbb{S}^n$, we use $M \succeq 0$ and $M \succ 0$ denote positive semidefinite and positive definite, respectively. Let $\mathbf{1}_n \in \mathbb{R}^n$ denote the all-ones vector and $\mathbf{0}_n \in \mathbb{R}^n$ denote the all-zeros vector. We use $\mathbf{I}[\cdot]$ to denote the indicator function, which takes the value $1$ when its argument is true and $0$ otherwise. We denote an equilibrium by \(x^e\). When the equilibrium is an attractor or a saddle point, we write it as \(x^*\) or \(x^\dagger\), respectively.


\begin{figure*}[htbp] 
    \centering
    \includegraphics[width=0.9\textwidth]{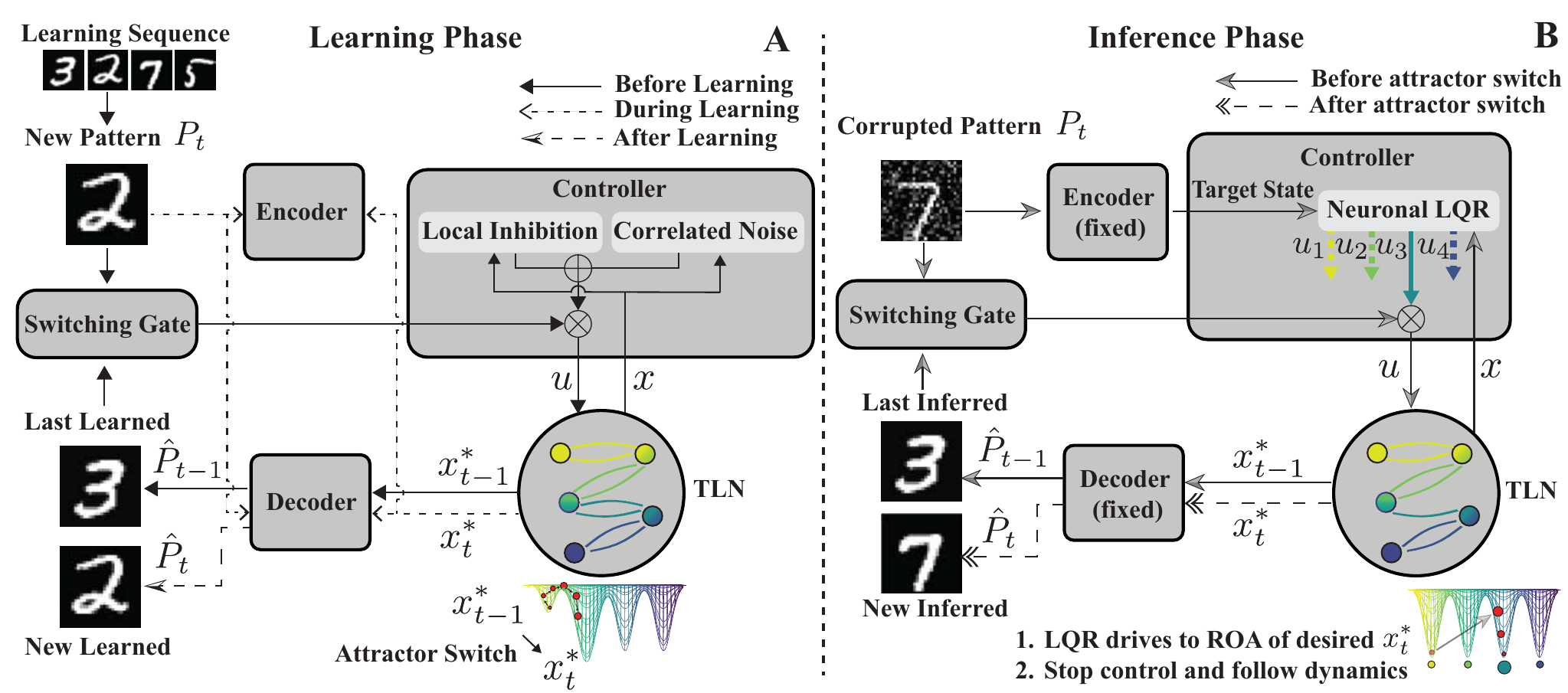} 
    \caption{Proposed auto-associative memory system. \textbf{A.} During learning, a new pattern is presented to the model, and a switching signal activates the controller to induce attractor switch and form new mappings (memories) in the encoder and decoder. \textbf{B.} Inference phase. The encoder maps a noisy pattern through the learned mappings to a target state. The feedback controller drives the latent dynamics toward the target state and is then turned off so the TLN settles into the corresponding attractor. The decoder then maps the attractor to the corresponding uncorrupted pattern.}
    \label{fig:TLN_arch}
\end{figure*}

\section{Threshold Linear Networks} \label{sec:TLN}

The proposed memory system uses threshold-linear network (TLN) dynamics as its latent space dynamics. TLNs have been used to study neural dynamics, such as firing-rate dynamics, are computationally tractable, and support fixed-point attractors.\cite{hahnloser2000permitted, morrison2016diversity}. In this section, we discuss TLNs and their equilibria. The standard TLN dynamics are:
\begin{align}\label{eq:basic_tln}
    \dot{x} = -x + [y]_{+}, \quad y := Wx+\theta
\end{align}
where $x \in \mathbb{R}^n$ is the state vector, $W \in \mathbb{S}^n$ is the connectivity matrix, and $\theta \in \mathbb{R}^n$ is the bias. The threshold-nonlinearity (i.e., rectifier) is defined as $[y]_+ := \max\{y,0\}$. We now introduce relevant definitions for TLN dynamics.

\begin{definition}
    \label{def:eq_supp}
    The \textit{support of an equilibrium} ${x^{e}}$ is the subset of active states, written as
        $\operatorname{supp} {x^{e}} := \{\, i \mid x^{e}_i > 0 \,\}$.
\end{definition}
\begin{definition}
    Let $[n]:=\{1,\dots,n\}$ and $\sigma\subseteq[n]$. The \textit{cell} associated with support $\sigma$ is:
    \begin{equation}
C_\sigma := \Big\{\, x\in\mathbb{R}^n \ \Big|\  y_i(x)\ge 0 \ \forall i\in\sigma,\ \ y_k(x)\le 0 \ \forall k\notin\sigma \,\Big\}.
    \end{equation} 
    For \(i=1,\dots,n-1\), we call \(C_{\{i,i+1\}}\) a double-support cell. For \(i=2,\dots,n-1\), we call \(C_{\{i-1,i,i+1\}}\) a triple-support cell.
\end{definition}

Our memory system uses a sub-class of TLNs called \textit{chain-structured TLNs (CSTLNs)}, which support multiple attractors by construction. In this paper, we will refer to a locally asymptotically stable equilibrium as an \textit{attractor}\footnote{In literature~\cite{morrison2016diversity}, this is also called a \emph{stable fixed point}}.

\begin{definition}\label{def:cstlns}
A \textit{CSTLN} is ~\eqref{eq:basic_tln}, with \(W\) and \(\theta\) defined as
\begin{equation}\label{eq:W}
\begin{aligned}
W_{k\ell}=
\begin{cases}
0, & k=\ell,\\
-1+\epsilon, & |k-\ell|=1,\\
-1-\delta, & |k-\ell|>1,
\end{cases}
\qquad
\begin{aligned}[t]
&\epsilon\in(0,1),\ \delta>0,\\
&\theta=c\,\mathbf{1}_n,\ c>0.
\end{aligned}
\end{aligned}
\end{equation}
\end{definition}
When parameters $\epsilon$, $\delta$, and $c$ are appropriately chosen, the CSTLN is competitive and nondegenerate; then, each cell of the CSTLN contains at most one equilibrium~\cite{morrison2016diversity}. We now present some basic results on the structure of these equilibria.

\begin{proposition}\label{rm:support}
For CSTLNs, the support of an equilibrium $x^{e}$ coincides with the support of the cell containing it.
\end{proposition}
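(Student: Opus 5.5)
The plan is to read off the support directly from the fixed-point equation and the sign conditions that define a cell. An equilibrium of \eqref{eq:basic_tln} satisfies $x^{e} = [y(x^{e})]_+$ with $y(x^e)=Wx^e+\theta$. So componentwise $x^e_i = \max\{y_i(x^e),0\}$, and in particular $x^e \ge 0$. Suppose $x^e \in C_\sigma$. For every $k\notin\sigma$ the cell condition gives $y_k(x^e)\le 0$, hence $x^e_k=0$. For every $i\in\sigma$ it gives $y_i(x^e)\ge 0$, hence $x^e_i = y_i(x^e)\ge 0$. This already yields the inclusion $\operatorname{supp} x^e \subseteq \sigma$.

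For the reverse inclusion I need $x^e_i>0$, that is $y_i(x^e)>0$, for every $i\in\sigma$. This is where the argument has content, since cells are closed and overlap on their boundaries. If some $i\in\sigma$ had $y_i(x^e)=0$, then $x^e$ would also lie in $C_{\sigma\setminus\{i\}}$, and "the cell containing it" would be ambiguous. I will therefore show that equilibria of a CSTLN never lie on such a boundary, i.e.\ $y_i(x^e)\neq 0$ for all $i$.

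First I would invoke the standing assumption stated just before the proposition: the CSTLN parameters are chosen so that the network is competitive and nondegenerate in the sense of \cite{morrison2016diversity}. In that framework nondegeneracy includes the genericity condition that, at every fixed point, $y_i(x^e)\neq 0$ for all $i$. This is the same condition that underlies "each cell contains at most one equilibrium."

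As a sanity check independent of that citation, I would also verify directly that $y_i(x^e)=0$ with $x^e_i=0$ forces a non-generic relation on $(\epsilon,\delta,c)$. On the support $\tau=\operatorname{supp}x^e$, the equilibrium solves $(I-W_{\tau\tau})x_\tau = c\mathbf{1}_\tau$. Then $y_i = W_{i\tau}x_\tau + c$. Setting this to zero gives a single polynomial equation in the parameters, which fails for all but a measure-zero set of parameter values, namely those excluded by nondegeneracy. Since $y_i(x^e)>0$ for every $i\in\sigma$, we get $x^e_i>0$, so $\sigma\subseteq\operatorname{supp}x^e$ and the two sets coincide.

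The main obstacle is this boundary case $y_i(x^e)=0$. The algebraic inclusion is immediate, but equality genuinely requires nondegeneracy. My first move is to cite it from \cite{morrison2016diversity}. Failing that, I would fall back to interpreting "the cell containing it" as the minimal such cell, namely $\sigma = \{i : y_i(x^e)>0\}$, for which the equality $\operatorname{supp}x^e=\sigma$ holds by the computation above with no further assumption.
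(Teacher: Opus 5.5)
Your proposal is correct and follows the paper's proof. Both read off $x^e_i>0 \iff y_i(x^e)>0$ from the fixed-point relation $x^e=[y(x^e)]_+$, and both then invoke the result of \cite{morrison2016diversity}, under the standing nondegeneracy assumption, that no equilibrium lies on a cell boundary, so that $y_i(x^e)\neq 0$ for every $i$. Your measure-zero ``sanity check'' is unnecessary, and on its own it would not suffice: you never show that the resulting polynomial in $(\epsilon,\delta,c)$ is not identically zero, and it gives only genericity rather than a statement about the fixed admissible parameters.
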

\begin{proof}
    From \eqref{eq:basic_tln}, we have $x^{e}=[y(x^{e})]_+$. Therefore,
\begin{equation}\label{eq:relu_support}
x_i^e>0 \iff y_i(x^e)>0,\qquad x_i^e=0 \iff y_i(x^e)\le 0.
\end{equation}
By ~\cite[Cor.~5.5]{morrison2016diversity},
each cell of a CSTLN contains at most one equilibrium and \textit{no} equilibrium lies on its boundary.
\end{proof}

For the remainder of the paper, we will denote both the support of an equilibrium and the support of the cell containing it by \(\sigma\). Then, let $\bar{\sigma}=[n]\setminus\sigma$ denote the off-support states.
We can reorder states such that on-support states are followed by off-support states, i.e., $\tilde x = \Pi x=  [x_{\sigma},x_{\bar\sigma}]^\top$ where $\Pi \in \mathbb{R}^{n\times n}$ is a permutation matrix. The dynamics restricted to a cell with support $\sigma$ are:
\begin{equation}
\label{eq:subsys_on_supp}
\frac{d\tilde x}{dt}
= \tilde A_\sigma\tilde x + \begin{bmatrix} \theta_{\sigma} \\ 0 \end{bmatrix},
\quad
\tilde A_\sigma =
\begin{bmatrix}
- I_{|\sigma|} + W_{\sigma\sigma} & W_{\sigma\bar{\sigma}} \\
0 & - I_{n-|\sigma|}
\end{bmatrix},
\end{equation}
where $W_{\sigma\sigma}$ is obtained by keeping only the rows and columns of $W$ in $\sigma$,
$W_{\sigma\bar{\sigma}}$ is obtained by keeping rows in $\sigma$ and columns in $\bar{\sigma}$,
and $\theta_\sigma$ is $\theta$ restricted to states in $\sigma$. 
Since each cell of the CSTLN contains at most one equilibrium, the equilibrium $x_e$ of permuted system\eqref{eq:subsys_on_supp} lies in the cell defined by $\sigma$ if and only if the following conditions hold:
\begin{subequations}\label{eq:cell_cond}
\begin{align}
y_i(x^e) &> 0, \quad \forall i\in\sigma, \label{eq:on-cond}\\
y_k(x^e) &< 0, \quad \forall k\notin\sigma. \label{eq:off-cond}
\end{align}
\end{subequations}

CSTLNs contain two types of equilibria: attractors (which we use to store memories) and saddle points (which we use in robustness analysis). We now discuss this further.

\begin{definition}
Let $S \subseteq [n] := \{1,2,\ldots,n\}$, and let $j \in [n]\setminus S$. 
The \textit{distance} from $j$ to $S$ is
$\operatorname{dist}(j,S) := \min_{s\in S} |j-s|$.
\end{definition}

\begin{lemma}\label{lemma:stb_eq}
    For each integer $i=1,\dots,n-1$, double-support cell $C_{\{i,i+1\}}$ contains a unique attractor $x^*$  where
    \begin{equation}\label{eq:adj_sln}
    \begin{aligned}
    x_k^{*} &=
    \begin{cases}
    \frac{c}{2-\epsilon}, & k=i\ \text{and}\ k=i+1,\\[8pt]
    0, & \text{otherwise.}
    \end{cases}\\[6pt]
    \end{aligned}
    \end{equation}
    
\end{lemma}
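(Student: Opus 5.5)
We plan to prove the lemma in three steps: (i) compute the equilibrium explicitly from the restricted dynamics \eqref{eq:subsys_on_supp}; (ii) verify the cell conditions \eqref{eq:cell_cond}; (iii) check local asymptotic stability through the eigenvalues of $\tilde A_\sigma$. Uniqueness then follows from the fact, cited in Proposition~\ref{rm:support}, that each cell of a CSTLN contains at most one equilibrium.

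\textbf{Step (i): the candidate equilibrium.} Take $\sigma=\{i,i+1\}$. Setting the right-hand side of \eqref{eq:subsys_on_supp} to zero gives $x_{\bar\sigma}=0$. The on-support block then satisfies $(I-W_{\sigma\sigma})x_\sigma=\theta_\sigma$, where
\[
I-W_{\sigma\sigma}=\begin{bmatrix}1 & 1-\epsilon\\ 1-\epsilon & 1\end{bmatrix}.
\]
Its determinant is $1-(1-\epsilon)^2=\epsilon(2-\epsilon)>0$ for $\epsilon\in(0,1)$, so the matrix is invertible. By symmetry the solution is $x_i=x_{i+1}=c/(2-\epsilon)$, which is \eqref{eq:adj_sln}.

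\textbf{Step (ii): the cell conditions.} For $k\in\sigma$ we have $y_k=x_k>0$, since $x=[y]_+$ on the support. For $k\notin\sigma$ we compute
\[
y_k = c + W_{ki}\,\tfrac{c}{2-\epsilon} + W_{k,i+1}\,\tfrac{c}{2-\epsilon}.
\]
Any $k\notin\{i,i+1\}$ is adjacent to at most one of $i,i+1$ and nonadjacent to the other. Hence one weight is at most $-1+\epsilon$ and the other is $-1-\delta$. This gives the bound
\[
y_k\le c\Bigl(1+\tfrac{-2+\epsilon-\delta}{2-\epsilon}\Bigr)=-\tfrac{c\,\delta}{2-\epsilon}<0,
\]
so \eqref{eq:off-cond} holds strictly. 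Strictness is what places the equilibrium in the interior of $C_{\{i,i+1\}}$. Moreover, the dynamics are locally linear there, given by \eqref{eq:subsys_on_supp}.

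\textbf{Step (iii): stability.} $\tilde A_\sigma$ is block upper triangular, so its spectrum is the union of the spectra of $-I+W_{\sigma\sigma}$ and $-I_{n-2}$. The eigenvalues of $-I+W_{\sigma\sigma}$ are $-1\pm(-1+\epsilon)$, namely $-\epsilon$ and $-2+\epsilon$, both negative. All eigenvalues are therefore negative. Because the vector field is affine in a neighborhood of $x^*$ (strict inequalities plus continuity of $y$), local exponential stability follows, so $x^*$ is an attractor.

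The main subtlety is Step (ii): we must handle the boundary indices $i=1$ and $i+1=n$, and check that the worst case uses the adjacent weight. Using the bound $W_{k\cdot}\le -1+\epsilon$ for both terms would not suffice. The argument needs the observation that $k$ cannot be adjacent to both $i$ and $i+1$, which is what forces the $-\delta$ term.
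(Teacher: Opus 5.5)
Your proposal is correct and follows essentially the same route as the paper. You solve $(I-W_{\sigma\sigma})x_\sigma=\theta_\sigma$, verify the on- and off-support cell conditions with worst case $-c\delta/(2-\epsilon)$, and read off the Hurwitz spectrum of the block-triangular $\tilde A_\sigma$. Your uniform adjacency bound over all $k\notin\sigma$ and your remark that the strict inequalities make the vector field affine near $x^*$ are slightly more careful than the paper's ``becomes more negative'' step and its implicit linearization, but they do not change the argument.
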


\begin{proof}

First, following~\cite[Lem.~5.2]{morrison2016diversity}, we compute the equilibrium \(x^e_\sigma\) in each cell \(C_\sigma\) by restricting the dynamics to the subsystem on \(\sigma\) and setting the remaining states \(x^e_{\bar{\sigma}}\) to zero. Applying this to \(C_{\{i,i+1\}}\) and using~\eqref{eq:subsys_on_supp} yields:
\begin{equation}\label{eq:attractor_sol}
    x_\sigma^e \;=\; (I - W_{\sigma\sigma})^{-1}\theta_\sigma
\;=\; \frac{c}{2-\epsilon}\,\mathbf{1}_{|\sigma|},
\quad
x_{\bar{\sigma}}^e \;=\; \mathbf{0}_{|\bar{\sigma}|}.
\end{equation}
Next, we show that the equilibrium lies in \(C_{\{i,i+1\}}\). From~\eqref{eq:attractor_sol}, the on-support states satisfy~\eqref{eq:on-cond},
\begin{equation}
    x^e_\sigma = [y_\sigma(x^e)]_+, \quad x^e_\sigma=\frac{c}{2-\epsilon}>0 \Rightarrow y_\sigma(x^e) >0.
\end{equation}
For the off-support states, consider \(k=i-1\) (the case \(k=i+2\) is similar). Substituting~\eqref{eq:attractor_sol} into $y_{i-1}(x^e)$ gives
\begin{align}
    &y_{i-1}(x^e) = (Wx^e+\theta)_{i-1} \notag\\
    &= (-1+\epsilon)x_i+(-1-\delta)x_{i+1}+c=\frac{-c\delta}{2-\epsilon} <0
    \label{eq:off_exm}
\end{align}
For any state $k$ with $\operatorname{dist}(k,\{i,i+1\})>1$, $y_k(x^e)$ becomes more negative. Therefore, $y_k(x^e) < 0$ for all $k\notin\sigma$, so the off-support condition~\eqref{eq:off-cond} holds. Thus, \(x^e\in C_{\{i,i+1\}}\). We then show that $x^e$ is an attractor $x^*$. For \(C_{\{i,i+1\}}\), the matrix \(-I_{|\sigma|}+W_{\sigma\sigma}\) in~\eqref{eq:subsys_on_supp} becomes
\begin{equation}
- I_{|\sigma|} + W_{\sigma\sigma}=
\begin{bmatrix}
-1 & -1+\epsilon\\
-1+\epsilon & -1
\end{bmatrix},
\end{equation}
with eigenvalues \(\lambda_1=-2+\epsilon\) and \(\lambda_2=-\epsilon\). Under the CSTLN parameter constraints in Definition~\ref{def:cstlns}, both eigenvalues are negative. The off-support Jacobian \(-I_{n-|\sigma|}\) in~\eqref{eq:subsys_on_supp} has eigenvalue \(-1\) only. Therefore, the system Jacobian \(\tilde A_{\sigma}\) for the dynamics in \(C_{\{i,i+1\}}\) is Hurwitz, which implies the equilibrium \(x^e\) is an attractor.
\end{proof}


\begin{lemma}[Triple-support cell saddle]
\label{lemma:tpl_sln}
For each integer $i=2,\dots,n-1$, triple-support cell $C_{\{i-1, i,i+1\}}$ has a unique hyperbolic saddle point $x^{\dagger}$ where
\begin{equation}\label{eq:tpl_sln}
m:=\frac{c\varepsilon}{\Delta},\qquad
n:=\frac{c(\delta+2\varepsilon)}{\Delta},\qquad
\Delta:=\delta+4\varepsilon-2\varepsilon^2,
\end{equation}
\begin{equation}\label{eq:tpl_sln_compact}
x_k^\dagger=
\begin{cases}
m, & k=i-1 \text{ or } k=i+1,\\
n, & k=i,\\
0, & \text{otherwise.}
\end{cases}
\end{equation}
\end{lemma}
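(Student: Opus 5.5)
The plan follows the template of the proof of Lemma~\ref{lemma:stb_eq}, with $\sigma=\{i-1,i,i+1\}$. There are four steps: compute the candidate equilibrium from the restricted linear system~\eqref{eq:subsys_on_supp}, verify the cell conditions~\eqref{eq:cell_cond}, classify it through the spectrum of $\tilde A_\sigma$, and get uniqueness from the fact, used in Proposition~\ref{rm:support}, that each cell of a CSTLN contains at most one equilibrium~\cite[Cor.~5.5]{morrison2016diversity}.

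\textbf{Step 1: the equilibrium.} Following~\cite[Lem.~5.2]{morrison2016diversity}, set $x^e_{\bar\sigma}=\mathbf{0}$ and solve $(I-W_{\sigma\sigma})x_\sigma=c\mathbf{1}_3$, where
\[
I-W_{\sigma\sigma}=\begin{bmatrix}1&1-\epsilon&1+\delta\\ 1-\epsilon&1&1-\epsilon\\ 1+\delta&1-\epsilon&1\end{bmatrix}.
\]
This matrix is invariant under reversing the order of the three states, so I look for a solution of the form $(m,n,m)$. The system then reduces to two equations:
\[
(2+\delta)m+(1-\epsilon)n=c,\qquad 2(1-\epsilon)m+n=c.
\]
Eliminating $n$ gives $m\,(2+\delta-2(1-\epsilon)^2)=\epsilon c$, and $2+\delta-2(1-\epsilon)^2=\Delta$. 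Back-substituting gives $n=c(\delta+2\epsilon)/\Delta$. The determinant $\Delta=\delta+2\epsilon(2-\epsilon)$ is positive since $\epsilon\in(0,1)$, so the reduced $2\times 2$ system is uniquely solvable. The same $\Delta$ reappears in Step 3 as the determinant of the symmetric block; since that block and the antisymmetric eigenvalue are both nonzero, $I-W_{\sigma\sigma}$ is invertible. Hence the symmetric solution is the only solution, which gives~\eqref{eq:tpl_sln_compact}.

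\textbf{Step 2: membership in $C_\sigma$.} Since $m,n>0$ and $x^e_\sigma=[y_\sigma(x^e)]_+$, the on-support condition~\eqref{eq:on-cond} holds, exactly as in Lemma~\ref{lemma:stb_eq}. For the off-support condition, take $k=i-2$; the case $k=i+2$ is symmetric. Then
\[
y_{i-2}=(-1+\epsilon)m+(-1-\delta)(n+m)+c,
\]
and a direct expansion gives
\[
\Delta\, y_{i-2}/c=-(\epsilon^2+3\epsilon\delta+\delta^2)<0.
\]
For $\operatorname{dist}(k,\sigma)>1$, all three weights equal $-1-\delta$, so $y_k$ is even more negative. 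Thus~\eqref{eq:off-cond} holds and $x^\dagger\in C_\sigma$. This expansion is the only genuinely computational step, and it is where I expect the main risk of algebra slips.

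\textbf{Step 3: saddle classification.} $\tilde A_\sigma$ is block upper-triangular, so its spectrum is that of $-I+W_{\sigma\sigma}$ together with $-1$ repeated $n-3$ times. Reversal symmetry splits $-I+W_{\sigma\sigma}$ into two invariant pieces.
\begin{itemize}
\item The antisymmetric vector $(1,0,-1)$ is an eigenvector with eigenvalue $-1+(1+\delta)=\delta>0$.
\item On the symmetric subspace spanned by $(a,b,a)$, the action is $\begin{bmatrix}-2-\delta&-1+\epsilon\\ -2+2\epsilon&-1\end{bmatrix}$. Its trace is negative and its determinant is $2+\delta-2(1-\epsilon)^2=\Delta>0$, so both of its eigenvalues have negative real part.
\end{itemize}
Hence $\tilde A_\sigma$ has one positive eigenvalue and all others negative, with none on the imaginary axis. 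The equilibrium $x^\dagger$ is therefore a hyperbolic saddle, and uniqueness within the cell follows from~\cite[Cor.~5.5]{morrison2016diversity}.
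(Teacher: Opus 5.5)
Your proposal is correct and follows the paper's proof step for step. You solve $(I-W_{\sigma\sigma})x_\sigma=c\mathbf{1}_3$ to get $[m,n,m]^\top$, you verify $y_{i-2}(x^\dagger)=-c(\delta^2+3\delta\epsilon+\epsilon^2)/\Delta<0$ (your expansion matches the paper's), and you read off one positive eigenvalue $\delta$ and two negative ones from the block-triangular Jacobian. The differences are cosmetic: the paper writes the two stable eigenvalues in closed form as $-1+\tfrac{b\pm\sqrt{b^2+8a^2}}{2}$, whereas you use the reversal-symmetry split with a trace/determinant test. Your invertibility of $I-W_{\sigma\sigma}$ (determinant $-\delta\Delta$) also already gives uniqueness within the cell directly, without needing \cite[Cor.~5.5]{morrison2016diversity}.
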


\begin{proof}
Similar to proof for Lemma~\ref{lemma:stb_eq}. An equilibrium of the cell $C_{\{i-1, i,i+1\}}$ is given by
\begin{equation}
x^e_\sigma=(I - W_{\sigma\sigma})^{-1}\theta_\sigma=[m,n,m]^\top,\qquad x^e_{\bar \sigma}=0.
\label{eq:tpl_exp_sln}
\end{equation}
We now show that $x^e_\sigma$ lies in \(C_{\{i-1,i,i+1\}}\). Under CSTLN parameter constraints in Definition~\ref{def:cstlns}, we have $\Delta>0$, and hence $m>0$ and $n>0$. Therefore, on-support states satisfies~\eqref{eq:on-cond}
\begin{equation}
    x^e_\sigma = [y_\sigma(x^e)]_+, \quad x^e_\sigma>0 \Rightarrow y_\sigma(x^e) >0.
\end{equation}
For off-support states, consider $k=i-2$ (similar for $k=i+2$)
\begin{align*}
&y_{i-2}(x^e) = (-1+\epsilon)x_{i-1}^e + (-1-\delta)x_{i}^e + (-1-\delta)x_{i+1}^e +c\\
&= c - (2+\delta-\varepsilon)m - (1+\delta)n = -\,\frac{c(\delta^2 + 3\delta\varepsilon + \varepsilon^2)}{\Delta} < 0.
\end{align*}
For any state $k$ with $\operatorname{dist}(k,\{i-1,i,i+1\})>1$, $y_k(x^e)$ becomes more negative. Therefore, the off-condition~\eqref{eq:off-cond} holds. Hence, $x^e \in C_{\{i-1, i,i+1\}}$. Then, we show that $x^e$ is a hyperbolic saddle point $x^\dagger$. For $C_{\{i-1, i,i+1\}}$, matrix $- I_{|\sigma|} + W_{\sigma\sigma}$ in~\eqref{eq:subsys_on_supp} becomes
\begin{equation}
    - I_{|\sigma|} + W_{\sigma\sigma} = \begin{bmatrix}
-1 & -1+\varepsilon & -1-\delta\\
-1+\varepsilon & -1 & -1+\varepsilon\\
-1-\delta & -1+\varepsilon & -1
\end{bmatrix},
\label{eq:triple_jss}
\end{equation}
with eigenvalues \(\lambda_1=\delta\) and \(\lambda_{2,3}=-1+\frac{b\pm\sqrt{b^2+8a^2}}{2}<0\), where \(a=-1+\epsilon\) and \(b=-1-\delta\). So~\eqref{eq:triple_jss} has one eigenvalue with positive real part and two with negative real parts. Off-support Jacobian $-I_{n-|\sigma|}$ in~\eqref{eq:subsys_on_supp} has eigenvalue $-1$. Therefore, the equilibrium \(x^e\) is a hyperbolic saddle point. 
\end{proof}

\section{Controller and Learning Rules}
Our memory system consists of a CSTLN (described above) in addition to a decoder, encoder, and controller. We now describe these additional components.

\subsection{Controller and action trigger}\label{subsec:controller}

Controller action is triggered by a mismatch between input pattern $P_t \in \mathbb{R}^{d}$ and output pattern $\hat{P_t} \in \mathbb{R}^{d}$. To measure mismatch, let \(s(t)\) denote cosine similarity\footnote{The memory system can work equally well with other similarity measures such as Euclidean distance.} between patterns $P_t$ and $\hat{P}_t$. When similarity drops below some threshold, we generate a trigger signal for the controller:
\begin{subequations}\label{eq:switch_dyn}
\begin{align}
\gamma(t) &= g\!\left(m\big(s_{\mathrm{th}}-s(t)\big)\right), \hspace{0.5em}
g(x)=\frac{1}{1+e^{-x}} \hspace{1em} \text{(trigger)} \label{eq:switch_dyn_a} \\
\dot q(t) &= \frac{\gamma(t)(1-q(t))}{\tau_q} \hspace{9.7em} \text{(latch)}
\label{eq:switch_dyn_b}\\
\dot T(t) &= q(t) \hspace{13.7em} \text{(timer)} \label{eq:switch_dyn_c}\\
w(t) &= g\!\left(\beta\big(T(t)-H\big)\right) 
\hspace{4.5em} \text{(timer threshold)} \label{eq:switch_dyn_d}\\
\dot G(t) &= \frac{(1-G(t))\,\gamma(t)\,(1-q(t))}{\tau_r}-\frac{G(t)\,w(t)}{\tau_d} \hspace{0.5em}
{\text{(pulse)}} \label{eq:switch_dyn_e}
\end{align}
\end{subequations}
where \(m\) is a positive constant chosen large enough to ensure fast mismatch detection, mismatch threshold \(s_{\mathrm{th}}\) is set near \(0.9\), \(\tau_q,\tau_r,\tau_d\) are time constants typically selected in the range \(2\)–\(10\), and \(H\) sets the pulse duration and should be large enough to trigger a successful transition without causing persistent switching. The pulse \(G(t)\) is then clipped to \([0,1]\). Intuitively, \(G(t)\) rises when a new pattern is detected, remains active for duration \(H\) to trigger the controller, then decays after sufficient control input has been injected. We now discuss the separate controllers for learning and inference.

\subsubsection{Controller for learning}
During learning, the controller injects input $u_{\mathrm{learn}}$ into the latent dynamics~\eqref{eq:basic_tln} as follows:
\begin{subequations} \label{eq:tln_learning}
\begin{align}
\dot x(t) \;&=\; -x(t) + \big[\,Wx(t)+\theta \;+\; rG(t)u_{\mathrm{learn}}(t)\big]_+ ,\\
u_{\mathrm{learn}}(t) &= W_{\mathrm{inh}}x(t)+\xi(x)\big.
\end{align}
\end{subequations}
where $W_{\mathrm{inh}}x$ represents local inhibition ($W_{\mathrm{inh}} \in \mathbb{R}^{n\times n}_{< 0}$), $\xi(x)$ represents correlated noise, and $r$ is a positive scalar. 
In general, in the absence of inputs, the latent state will settle at an attractor.
When the controller is triggered, $W_{\mathrm{inh}}$ is updated.
This reshapes the latent dynamics so that the state can be easily induced (with correlated noise) to another attractor.
For example, to move the state from the attractor with support \(\sigma^i=\{i,i+1\}\) to the one with support \(\sigma^{i+1}=\{i+1,i+2\}\), we set
\begin{equation}\label{eq:local_inh}
[W_{\mathrm{inh}}]_{\sigma^i\sigma^i}=
\begin{bmatrix}
0 & -c_{\mathrm{inh}}\\
-c_{\mathrm{inh}} & 0
\end{bmatrix},
\end{equation}
where $c_{\mathrm{inh}} \in \mathbb{R}_{>0}$. Correlated noise is
\begin{subequations}\label{eq:noise_model}
\begin{align}
\xi(x) &= \kappa \tanh\!\big(a(t)\big)\,k\big(x\big), \label{eq:noise_model_a}\\
\dot a(t) &= -\frac{1}{\tau}a(t)+\sqrt{\frac{2}{\tau}}\,\zeta(t). \label{eq:noise_model_b}
\end{align}
\end{subequations}
where $k(x)\in\mathbb{R}^n$ is a spatial weighting profile across states, centered on the current support and decaying exponentially away from it; $a(t)$ is an Ornstein–Uhlenbeck (OU) process with time constant $\tau$; \(\zeta(t)\) is Gaussian white noise. Saturation \( \kappa\tanh(a(t))\) is used to avoid large perturbations. Parameters \(c_{\mathrm{inh}}\), \(r\), $\kappa$, and \(\tau\) should be chosen so that the attractor transition occurs within duration \(H\). 

\subsubsection{Controller for inference}
During inference, the encoder maps a (possibly corrupted) input pattern to a target state \(x_{\mathrm{tar}}\) near the desired attractor. The controller injects state feedback input $u_{\mathrm{fb}}$ into latent dynamics \eqref{eq:basic_tln} as follows:
\begin{subequations}\label{eq:inference_controller}
\begin{align}
\dot x(t) &= -x(t)+[Wx(t)+\theta]_+ + rG(t)u_{\mathrm{fb}}(t), \label{eq:inference_controller_a}\\
u_{\mathrm{fb}}(t) &= -K\bigl(x(t)-x_{\mathrm{tar}}\bigr). \label{eq:inference_controller_b}
\end{align}
\end{subequations}
We design a gain matrix $K$ so that the closed-loop system \eqref{eq:inference_controller} drives the state $x$ to the target state $x_\mathrm{tar}$. To do so, we use an LQR controller derived from the linearization (about $x_{\mathrm{tar}}$) of the closed-loop dynamics.
\begin{subequations}\label{eq:linearize_model}
\begin{align}
&\dot x-\dot x_{\mathrm{tar}}
= A\bigl(x-x_{\mathrm{tar}}\bigr)+rG(t)u_{\mathrm{fb}}, \label{eq:linearize_model_a}\\
&A= -I+DW,\quad
D_{ii}=\mathbf{I}_{\{w_i^\top x_{\mathrm{tar}}+\theta_i>0\}}. \label{eq:linearize_model_b}
\end{align}
\end{subequations} 
Here, $A\in\mathbb{R}^{n\times n}$ is the state matrix of the local linearized model and $D$ selects the positive components of the threshold nonlinearity. Let $B(t)=rG(t)$, we introduce an additional auxiliary dynamical system to compute the optimal LQR gain:
\begin{subequations}\label{eq:LQR_dynamics}
\begin{align}
\hspace{-0.5em}g_{\rm on} &= [(h-h_{\rm on})]_+, \,
g_{\rm off} = [(h_{\rm off}-h)]_+, \,
h = Wx_{\mathrm{tar}}+\theta,\\
\dot D &= \tfrac{1}{\tau_d}\!\left(\alpha(I-D)\operatorname{Diag}(g_{\rm on})-\beta D\operatorname{Diag}(g_{\rm off})\right),\\
\dot A &= \tfrac{1}{\tau_a}\left(-A+DW-I\right), \qquad
A_{\rm cl}=A-BK,\\
\dot P &= \tfrac{1}{\tau_p}\!\left(A_{\rm cl}^\top P+PA_{\rm cl}+Q+K^\top RK\right),\\
\dot K &= -\tfrac{1}{\tau_k}\left(RK-B^\top P\right),
\end{align}
\end{subequations}
where \(h_{\rm on}\) is an activation threshold chosen slightly above \(0\) while \(h_{\rm off}=0\). Constants \(\alpha\) and \(\beta\) are selected so that \(D(t)\) quickly converges to $D$ in~\eqref{eq:linearize_model_b}.
Time constants in~\eqref{eq:LQR_dynamics} are chosen to be small (around \(0.01\)) so that \(D(t)\), \(A(t)\), and \(P(t)\) converge faster than \(K(t)\), which in turn converges to the optimal LQR solution.
After duration $H$, the state reaches $x_\mathrm{tar}$ and the controller is turned off as \(G(t)\) approaches zero.
Then, the CSTLN returns to its autonomous dynamics \eqref{eq:basic_tln}. If the target state $x_{\mathrm{tar}}$ lies in the ROA of the intended attractor, then CSTLN dynamics will converge to it, and the decoder will recover the correct pattern.
\textbf{Remark:} We formulate the control mechanism as a dynamical system; this is a desirable property for biological plausibility, since biological neurons perform computations via dynamics. If biological plausibility is not a priority, one could equally use a standard, non-dynamical LQR setup.

\subsection{Encoder and decoder learning rules}\label{subsec:learning}
The encoder $W_E\in\mathbb{R}^{d \times n}$ maps input pattern $P \in \mathbb{R}^d$
to a target latent state via $x_{\mathrm{tar}}=W_E^\top P$.
The decoder $W_D\in\mathbb{R}^{n\times d}$ maps latent state $x$ to output pattern $\hat{P}$ via $\hat P=W_D^\top x$. 

During learning, when a new pattern $P$ is presented, the abovementioned controller mechanism moves the latent state to a new attractor $x^*$. Once this occurs, we update $W_E$ and $W_D$ to associate $x^*$ with $P$ . For the encoder, we only update the rows associated with support $\sigma$ of $x^*$. Let $F$ be the set of frozen rows (i.e., weights associated with previously learned patterns) and define $U:=\sigma \backslash F$. This gives the decomposition
\begin{equation}\label{eq:encoder_learn}
    x^\ast = W_E^\top P
      = \underbrace{W_{E,F}^\top}_{\text{frozen rows}}P
      + \underbrace{W_{E,U}^\top}_{\text{free rows}}P.
\end{equation}
We then choose the minimum-norm update for the free rows using the Moore--Penrose solution:
\begin{equation}\label{eq:weight_update}
W_{E,U}^\top
=
\frac{\bigl(x^* - W_{E,F}^\top P\bigr)P^\top}{\|P\|_2^2}.
\end{equation}
This ensures that after learning, new pattern $P$ is associated with $x^*$ without disrupting previously learned associations. An analogous column update process applies to the decoder. The overall memory system including latent dynamics, controller, encoder, and decoder is presented in Algorithm \ref{alg:memory}.

\begin{algorithm}[ht]
\caption{Proposed memory system}
\label{alg:memory}
\begin{algorithmic}
\State \textbf{Learning:} \textbf{for} each new pattern \(P\) \textbf{do:}
    \State \quad Compute cosine similarity $s(t)$ and control trigger \eqref{eq:switch_dyn}
    \State \quad Apply control \eqref{eq:tln_learning} to drive latent state $x$ to attractor $x^*$
    \State \quad Update encoder and decoder weights~\eqref{eq:weight_update}
\State \textbf{Inference:} \textbf{for} each corrupted input pattern \(P\) \textbf{do:}
    \State \quad Compute cosine similarity $s(t)$ and control trigger \eqref{eq:switch_dyn}
    \State \quad Compute target state \(x_{\mathrm{tar}}=W_E^\top P\)
    \State \quad Apply control \eqref{eq:inference_controller} to drive the latent state $x$ to $x_{\mathrm{tar}}$
    \State \quad Deactivate control after duration $H$, $x$ will settle to $x^*$
    \State \quad Compute the reconstructed pattern $\hat P=W_D^\top x^*$
\end{algorithmic}
\end{algorithm}    

\section{Region of attraction analysis} \label{sec:ROA}
The robustness of our memory system depends on the region of attraction (ROA) of the TLN attractors.
We now present ROA computations for TLNs.
In general, computing ROA for high-dimensional nonlinear systems is difficult \cite{topcu2008local}.
For TLNs, existing analyses require additional assumptions on network structure \cite{lienkaemper2022combinatorial, alvarez2026attractor}.
In contrast, we now present (i) a Lyapunov-based approximation of ROAs for general TLNs and (ii) a geometric method for exact calculation of ROAs for CSTLNs. We then use these to provide robustness guarantees for the proposed memory system.


\subsection{ROA certification for TLNs via SDP}
A common way to compute an inner ROA approximation is to find a Lyapunov function; 
this can be formulated as a semidefinite program (SDP) \cite{liao2022quadratic}. 
This method has been applied to threshold nonlinearities, leveraging quadratic constraints (QCs) derived from slope-restricted sector bounds\cite{noori2024complete}. 
However, \cite{noori2024complete} uses a global slope restriction, which is too conservative for our application; we now introduce a local slope restriction similar to~\cite{yin2021stability} and use this to find an appropriate Lyapunov function.
We first shift coordinates so that the attractor is at the origin. Define \(z\) such that:
\begin{equation}
     z = x - x^*, \qquad y^* := W x^* + \theta.
\end{equation}
The dynamics in the shifted coordinates are:
\begin{equation}
    \dot{z} = -z + [Wz +y^*]_{+} - [y^*]_{+} = -z + \phi(z),
    \label{eq:shifted_sys}
\end{equation}
where \(\phi(z) = [Wz + y^*]_{+} - [y^*]_{+}\). Consider the local ellipsoidal region $\mathcal{D}:=\{\, z \mid z^\top E z \le \alpha^2 \,\}$, where $E\in\mathbb{S}^n$, $E\succ 0$, and $\alpha>0$.
For componentwise analysis, let $v:=Wz+y^*$. Then, for all \(z \in \mathcal D\), each $v_i$ satisfies:
\begin{equation}
    v_i \in \left[ y^*_i - \alpha \sqrt{w_i^\top E^{-1} w_i},\; y^*_i + \alpha \sqrt{w_i^\top E^{-1} w_i} \right],
    \label{eq:ele_bds}
\end{equation}
where $w_i^\top$ is the $i$-th row of $W$. Let $[\underline v,\overline v]$ denote the componentwise bounds obtained by stacking the intervals in~\eqref{eq:ele_bds}. Equivalently, we can write $\phi(v)=[v]_+ - [y^*]_+$ for $v \in [\underline{v}, \overline{v}]$.  Since the shifted system~\eqref{eq:shifted_sys} has an equilibrium at \(z^*=0\), evaluate $(v, \phi(v))$ at $z^*$, we have:
\begin{subequations}\label{eq:star_defs}
\begin{align}
    v^* &:= W z^* + y^* = y^*, \label{eq:star_defs_a}\\
    \phi(v^*) &:= [W z^* + y^*]_+ - [y^*]_+ = 0. \label{eq:star_defs_b}
\end{align}
\end{subequations}
Since $\phi$ acts elementwise, for each $i$ we have
$
[\phi(v)]_i = [v_i]_+ - [y_i^*]_+,
$
each component $[\phi(v)]_i$ falls into one of the two cases shown in Figure~\ref{fig:local_relu}, depending on the sign of $y_i^*$.
\begin{figure}[htbp] 
    \centering
    \includegraphics[width=0.45\textwidth]{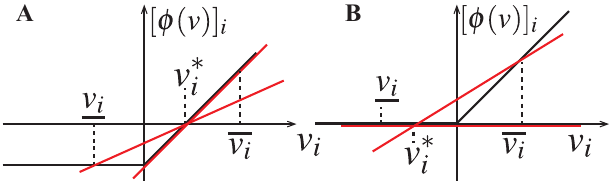} 
    \caption{Local sector bound for shifted system.}
    \label{fig:local_relu}
\end{figure}

Therefore, for $v\in[\underline{v},\overline{v}]$, $\phi(v)$ satisfies the tighter slope restriction $\frac{\Delta \phi(v)}{\Delta v}\in[S_{\alpha},S_{\beta}]$ around $(v^*,\phi(v^*))$. Following~\cite[Lemma~1]{yin2021stability}, if \( \lambda \in \mathbb{R}_{\ge 0}^{n} \) then
\begin{equation*}\label{eq:qc_main}
\begin{aligned}
&\begin{bmatrix}
    v - v^* \\
    \phi(v) - \phi(v^*)
\end{bmatrix}^\top
 \Psi^\top M(\lambda) \Psi
\begin{bmatrix}
    v - v^* \\
    \phi(v) - \phi(v^*)
\end{bmatrix}
\;\geq\; 0, \forall v \in [\underline{v}, \overline{v}], \\[8pt]
&\Psi := \begin{bmatrix}
        \operatorname{diag}(S_\beta) & -I \\
        -\operatorname{diag}(S_\alpha) & I
    \end{bmatrix},
M(\lambda) :=\begin{bmatrix}
        0 & \operatorname{diag}(\lambda) \\
        \operatorname{diag}(\lambda) & 0
    \end{bmatrix}.
\end{aligned}
\end{equation*}
With \(v=Wz+y^*\) and define a new matrix $M_{\kappa}$,
\[
M_{\kappa}\coloneqq
\begin{bmatrix} W & 0 \\ 0 & I \end{bmatrix}^{\!\top}
\Psi^{\top} M(\lambda)\,\Psi
\begin{bmatrix} W & 0 \\ 0 & I \end{bmatrix},
\]
the corresponding quadratic constraint in the $z$-coordinates is
\begin{equation}\label{eq:qc_zphi}
\begin{bmatrix} z \\ \phi(z) \end{bmatrix}^{\!\top}
M_{\kappa}
\begin{bmatrix} z \\ \phi(z) \end{bmatrix} \ge 0.
\end{equation}
After learning, for each attractor, we can solve an SDP to find a local ROA estimation and the maximum noise bound for its associated pattern, as given in the following theorem.

\begin{theorem}[SDP-based robustness certification]\label{thm:SDP} Given \(E \in \mathbb{S}^{n}\) with \(E \succeq 0\), and \(\alpha>0\), define the local domain \(\mathcal{D}=\{ z \in \mathbb{R}^{n} : z^\top E z \le \alpha^2 \}\). 
Define \(L:=Y\Sigma\), where $W_E = U\Sigma Y^\top$ is the compact singular value decomposition of encoder $W_E$ (i.e., with zero singular values discarded). Assume that the nonlinearity $\phi(z)$ in ~\eqref{eq:shifted_sys} with $A=-I$ and $B=I$ satisfies local QC ~\eqref{eq:qc_zphi} over \(\mathcal D\).  
If there exist matrix \(P\succ 0\), diagonal matrices \(S_0\succ 0\), \(S_1\succ 0\), and positive scalars $\epsilon$ such that 
\begin{subequations}\label{eq:SDP}
\begin{align}
&\begin{bmatrix}
A^\top P + P A & P B \\
B^\top P & 0
\end{bmatrix}
+ M_{\kappa}
\prec
\begin{bmatrix}
-\epsilon I & 0 \\
0 & 0
\end{bmatrix},
\label{eq:v_dot}\\[4pt]
&\frac{1}{\alpha^{2}}E \preceq P,
\label{eq:s_proc}\\[2pt]
&L^\top P L \preceq \frac{1}{r^{2}} I,
\label{eq:robust_lmi}
\end{align}
\end{subequations}
then
$
\mathcal{E}(P)=\{z\in\mathbb{R}^n : z^\top P z \le 1\}
$
is a local ROA estimate of attractor \(z^*\) for system~\eqref{eq:shifted_sys}. Furthermore, assume the input pattern to the memory system is a previously learned pattern corrupted with additive noise $\eta$. Then, as long as \(\|\eta\|_2 \le r\), the encoder will map the corrupted input pattern to a target state which lies in \(\mathcal{E}(P)\).
\end{theorem}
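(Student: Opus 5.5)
The plan is to take the quadratic Lyapunov candidate $V(z)=z^\top P z$ and prove the two claims of the theorem separately. The first claim is the ROA certificate, which follows from the LMI \eqref{eq:v_dot} combined with the local QC \eqref{eq:qc_zphi} through an S-procedure argument. The second claim is the robustness bound, which follows from the encoder's learned association and the LMI \eqref{eq:robust_lmi}.

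\textbf{Decrease of $V$ on $\mathcal D$.} With $A=-I$ and $B=I$, the shifted dynamics \eqref{eq:shifted_sys} read $\dot z = Az + B\phi(z)$. Hence
\[
\dot V = \begin{bmatrix} z\\ \phi(z)\end{bmatrix}^{\!\top}\begin{bmatrix} A^\top P+PA & PB\\ B^\top P & 0\end{bmatrix}\begin{bmatrix} z\\ \phi(z)\end{bmatrix}.
\]
Multiply \eqref{eq:v_dot} on both sides by $[z^\top,\phi(z)^\top]^\top$. This gives $\dot V + [z;\phi(z)]^\top M_\kappa [z;\phi(z)] \le -\epsilon\|z\|^2$, with strict inequality whenever $(z,\phi(z))\neq 0$. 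For $z\in\mathcal D$ the QC term is nonnegative by assumption, so $\dot V\le -\epsilon \|z\|_2^2$ on $\mathcal D$.

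\textbf{Containment and invariance.} Next I would show $\mathcal E(P)\subseteq\mathcal D$ using \eqref{eq:s_proc}. If $z^\top P z\le 1$, then $z^\top E z \le \alpha^2 z^\top P z\le \alpha^2$, so $z\in\mathcal D$. Now take an initial condition $z(0)\in\mathcal E(P)$. While the trajectory remains in $\mathcal E(P)$, it lies in $\mathcal D$, so $V$ is nonincreasing along it. A standard first-exit-time argument then shows the trajectory can never leave $\mathcal E(P)$: at a first exit time $V$ would have to exceed $1$, contradicting monotonicity. Therefore $\mathcal E(P)$ is forward invariant.

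\textbf{Convergence.} On the invariant set we have $\dot V\le -\epsilon\|z\|^2\le -(\epsilon/\lambda_{\max}(P))V$. Hence $V(z(t))\to 0$ exponentially, so $z(t)\to 0=z^*$. This shows $\mathcal E(P)$ is an inner estimate of the ROA.

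\textbf{Robustness bound.} After learning, the encoder update \eqref{eq:weight_update} guarantees $W_E^\top P_{\mathrm{pat}} = x^*$ for the stored pattern $P_{\mathrm{pat}}$. For the corrupted input $P_{\mathrm{pat}}+\eta$, linearity gives the target state $x_{\mathrm{tar}} = x^* + W_E^\top\eta$. The shifted target is therefore
\[
z_{\mathrm{tar}} = W_E^\top \eta = Y\Sigma U^\top\eta = L\,(U^\top\eta).
\]
Then
\[
z_{\mathrm{tar}}^\top P z_{\mathrm{tar}} = (U^\top\eta)^\top L^\top P L\,(U^\top\eta) \le \tfrac{1}{r^2}\|U^\top\eta\|_2^2 \le \tfrac{1}{r^2}\|\eta\|_2^2\le 1,
\]
where the first inequality is \eqref{eq:robust_lmi} and the second uses that $U$ has orthonormal columns. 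Hence $z_{\mathrm{tar}}\in\mathcal E(P)$.

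\textbf{Main obstacle.} The delicate step is the invariance argument. The QC \eqref{eq:qc_zphi}, and hence the decrease of $V$, is valid only inside $\mathcal D$, so the containment $\mathcal E(P)\subseteq\mathcal D$ must be used carefully to rule out trajectories escaping before the QC becomes invalid. The remaining steps are routine.
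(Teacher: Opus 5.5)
Your proposal is correct and follows the same route as the paper's proof. It uses $V(z)=z^\top P z$ with \eqref{eq:v_dot} and the local QC to get $\dot V\le-\epsilon\|z\|_2^2$ on $\mathcal D$, obtains $\mathcal E(P)\subseteq\mathcal D$ from \eqref{eq:s_proc}, deduces forward invariance and convergence, and finally checks $z_0=W_E^\top\eta\in\mathcal E(P)$ via \eqref{eq:robust_lmi}. The differences are only elaborations: a direct one-line containment instead of citing the S-procedure, an explicit first-exit argument (it closes because any boundary point of $\mathcal E(P)$ is nonzero and lies in $\mathcal D$, so $\dot V<0$ there), the factorization $W_E^\top\eta=L\,U^\top\eta$ in place of the paper's remark that $W_E P W_E^\top$ and $L^\top P L$ share nonzero eigenvalues, and stating explicitly the assumption, tacit in the paper, that $W_E^\top$ maps the stored pattern exactly onto its attractor.
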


\begin{proof}
\(V(z)=z^\top P z\) is a Lyapunov function; pre- and post-multiplying~\eqref{eq:v_dot} by $[z^\top,\phi(z)^\top]^\top$ and applying~\eqref{eq:qc_zphi} yields $\dot V(z) \le -\varepsilon \|z\|_2^2$. 
In addition,~\eqref{eq:s_proc} implies that the set \(\{z : z^\top P z \le 1\}\) is contained in \(\{z : z^\top E z \le \alpha^2\}\) through S-procedure. Thus, \(\mathcal E(P)\) is contained in the local domain where the QC holds and is therefore forward invariant. Consequently, trajectories that starting in \(\mathcal E(P)\) converge to attractor \(z^*=0\) by the Lyapunov stability theory~\cite{khalil2002nonlinear}. Next, let \(z_0=W_E^\top\eta\) be the perturbation in $z$ coordinates.
Requiring \(z_0\in\mathcal E(P)\) for all \(\|\eta\|_2\le r\) implies
$
\eta^\top W_E P W_E^\top \eta \le 1,
\, \forall \eta:\|\eta\|_2\le r,
$
or, equivalently,
$
W_E P W_E^\top \preceq \frac{1}{r^2}I.
$
Since
$
W_E P W_E^\top
$
and
$
L^\top P L
$
have the same eigenvalues, this is equivalent to 
requiring \eqref{eq:robust_lmi}.
\end{proof}

\subsection{ROA certification for CSTLNs via LP}\label{sec:ROA_forward}
The SDP-based method yields hyperellipsoidal ROA approximations for general TLNs.
We now present a complementary method for CSTLNs specifically, which leads to a linear program (LP). We first construct forward-invariant sets, then show that each set is partitioned by a separating hyperplane into two regions; each region belongs to a different attractor's ROA. Each attractor's complete ROA can be assempled from pieces across these sets.

\begin{theorem}[Local forward-invariant set]\label{thm:forward_invariant}
For CSTLNs with \(n\ge 3\) and \(i=2,\ldots,n-1\), the following set
\begin{multline}\label{eq:RFI}
\mathcal{R}_{\mathrm{FI}}^{(i)}
:= \{\, x \in \mathbb{R}^n \mid
y_i(x)\ge \alpha,\ y_{i\pm1}(x)\le \beta,\\
y_k(x)\le 0 \ \forall\, k\notin \{i-1,i,i+1\} \,\},\quad \alpha,\beta\ge0 .
\end{multline}
is forward-invariant if there exist \(\alpha,\beta\in\mathbb{R}_{\ge 0}\) such that
\begin{subequations}\label{eq:feas}
\begin{align}
&\alpha \le c + 2(-1+\epsilon)\beta, \label{eq:alpha_1}\\
&(-1+\epsilon)\alpha + c \le \beta, \label{eq:alpha_2}\\
&\alpha \ge \frac{c}{1+\delta}. \label{eq:alpha_3}
\end{align}
\end{subequations}
\end{theorem}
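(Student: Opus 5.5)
The approach is to pass to the coordinates $y=Wx+\theta$, in which $\mathcal{R}_{\mathrm{FI}}^{(i)}$ is a polyhedron cut out by coordinatewise inequalities. I would then verify the Nagumo (sub-tangentiality) condition on each face.

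\textbf{Dynamics in $y$.} Along trajectories of \eqref{eq:basic_tln},
\begin{equation*}
\dot y = W\dot x = -Wx + W[y]_+ = -y + \theta + W[y]_+ ,
\end{equation*}
so each component satisfies $\dot y_k = -y_k + c + \sum_{j\neq k} W_{kj}[y_j]_+$, since $W_{kk}=0$. The set $\mathcal{R}_{\mathrm{FI}}^{(i)}$ is a closed polyhedron in $x$, being an intersection of half-spaces in the affine functions $y_k(x)$. The vector field $-x+[Wx+\theta]_+$ is globally Lipschitz, so solutions are unique. By Nagumo's theorem for closed sets, forward invariance then follows if the field lies in the tangent cone at every boundary point. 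For a polyhedron this cone is $\{v: a_j^\top v\le 0 \text{ for all active constraints } j\}$. It therefore suffices to check each defining inequality separately at points of $\mathcal{R}_{\mathrm{FI}}^{(i)}$ where it is active, using the other inequalities only as bounds. Points lying on several faces need no extra work.

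\textbf{Face checks.} Throughout, inside the set we have $[y_i]_+=y_i\ge\alpha$, $0\le [y_{i\pm1}]_+\le\beta$, and $[y_k]_+=0$ for $k\notin\{i-1,i,i+1\}$.
\begin{itemize}
\item On $y_i=\alpha$: $\dot y_i=-\alpha+c+(-1+\epsilon)\big([y_{i-1}]_++[y_{i+1}]_+\big)\ge -\alpha+c+2(-1+\epsilon)\beta\ge 0$, using $-1+\epsilon<0$ and \eqref{eq:alpha_1}.
\item On $y_{i+1}=\beta$: the neighbours of $i+1$ are $i$ and $i+2$, with $[y_{i+2}]_+=0$, while all other couplings are nonpositive. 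Hence $\dot y_{i+1}\le -\beta+c+(-1+\epsilon)[y_i]_+\le -\beta+c+(-1+\epsilon)\alpha\le 0$ by \eqref{eq:alpha_2}. The face $y_{i-1}=\beta$ is symmetric.
\item On $y_k=0$ with $k\notin\{i-1,i,i+1\}$: here $\operatorname{dist}(k,\{i\})\ge 2$, so $W_{ki}=-1-\delta$. All other terms $W_{kj}[y_j]_+$ are $\le 0$, giving $\dot y_k\le c-(1+\delta)\alpha\le 0$ by \eqref{eq:alpha_3}.
\end{itemize}
These three items cover every constraint, so the field points inward or tangentially on all of $\partial\mathcal{R}_{\mathrm{FI}}^{(i)}$. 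The cases $i=2$ and $i=n-1$, where one index $i\pm2$ does not exist, only drop terms and leave the argument unchanged.

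\textbf{Main obstacle.} The sign computations above are routine. The delicate point is applying Nagumo with non-strict inequalities at boundary points where several constraints are active and some derivatives equal $0$ exactly.

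The first remedy I would try is to quote the Bony–Brezis/Nagumo theorem, which only needs the weak tangency conditions together with local Lipschitz continuity of the field, both of which hold here.

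As a self-contained alternative, I would use a comparison argument componentwise on $y$:
\begin{itemize}
\item Suppose a trajectory leaves the set, and let $t_0$ be the exit time.
\item On a short interval after $t_0$, the bounds above hold up to an $O(\text{violation})$ error, because the right-hand side of $\dot y$ is Lipschitz in $y$.
\item A Grönwall estimate on the maximal violation $\max\{\alpha-y_i,\ y_{i\pm1}-\beta,\ y_k\}_+$ then shows that this quantity stays zero, which is a contradiction.
\end{itemize}
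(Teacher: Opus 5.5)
Your proposal is correct and follows essentially the same route as the paper. You rewrite the dynamics as $\dot y_k = c - y_k + \sum_{j\neq k} W_{kj}[y_j]_+$, verify the sign of the derivative on each face $y_i=\alpha$, $y_{i\pm1}=\beta$, $y_k=0$ using \eqref{eq:alpha_1}--\eqref{eq:alpha_3}, and conclude by Nagumo's theorem for a closed polyhedron under a globally Lipschitz field. Your write-up is slightly tidier than the paper's: a single bound via $W_{ki}=-1-\delta$ covers all far faces instead of two distance cases, and you state explicitly that weak tangency suffices once solutions are unique.
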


\begin{proof}
We show that if~\eqref{eq:alpha_1}--\eqref{eq:alpha_3} hold, the vector field points inward on the boundary of \(\mathcal{R}_{\mathrm{FI}}^{(i)}\), and therefore \(\mathcal{R}_{\mathrm{FI}}^{(i)}\) is forward-invariant by Nagumo's theorem~\cite{nagumo1942lage}. Let $w_i^\top$ be the $i$-th row of $W$. On the boundary \(y_i(x)=w_i^\top x+c=\alpha\), differentiating \(y_i\) and restricting to \(\mathcal{R}_{\mathrm{FI}}^{(i)}\) in~\eqref{eq:RFI} gives:
\begin{align}
\dot y_i \;&=\; w_i^\top \dot x \;=\; -\,w_i^\top x \;+\; w_i^\top [y(x)]_{+}, \\
\label{eq:gidot}
\dot y_i \;&=\;c-\alpha + (-1+\epsilon)[y_{i-1}]_{+} + (-1+\epsilon)[y_{i+1}]_{+}.
\end{align}
Substituting \(y_{i\pm1}(x)\le\beta\) into~\eqref{eq:gidot}, \eqref{eq:alpha_1} implies that \(\dot y_i \ge 0\). Similarly, on the boundary $y_{i\pm1}=\beta$:
\begin{align}
\label{eq:y_cond_1}
\dot y_{i-1} \;=\; & c-\beta+(-1+\epsilon)[y_i]_{+} + (-1-\delta)[y_{i+1}]_{+},\\
\label{eq:y_cond_2}
\dot y_{i+1} \;=\; & c-\beta+(-1+\epsilon)[y_i]_{+} + (-1-\delta)[y_{i-1}]_{+}.
\end{align}
Substituting \(y_{i\pm1}(x)\le 0\) and \(y_i(x)\ge \alpha\) into~\eqref{eq:y_cond_1} and~\eqref{eq:y_cond_2}, it follows from~\eqref{eq:alpha_2} that \(\dot y_{i\pm1} \le 0\). On the boundary $y_k=0$, we consider the following 2 cases. If $\operatorname{dist}(j,\{i-1,i,i+1\}) \ge 2$:
    \begin{equation}\label{eq:node k}
    \dot y_k = c + (-1-\delta) ([y_{i-1}]_+ + [y_i]_+ + [y_{i+1}]_+).
    \end{equation}
    If $\operatorname{dist}(j,\{i-1,i,i+1\}) = 1$, consider \(y_{i+2}\) (similarly \(y_{i-2}\)),
    \begin{equation}\label{eq:node i_2}
    \dot y_{i+2} = c + (-1-\delta) ([y_{i-1}]_+ + [y_i]_+) + (-1+\epsilon)[y_{i+1}]_+.
    \end{equation}
    Substituting \(y_{i\pm1}(x)\le 0\) and \(y_i(x)\ge \alpha\) into~\eqref{eq:node k} and~\eqref{eq:node i_2}, it follows from~\eqref{eq:alpha_3} that \(\dot y_{k} \le 0\) and \(\dot y_{i+2} \le 0\). Therefore, the vector field points inward on every boundary of \(\mathcal{R}_{\mathrm{FI}}^{(i)}\). Since the TLN dynamics in~\eqref{eq:basic_tln} are globally Lipschitz and \(\mathcal{R}_{\mathrm{FI}}^{(i)}\) is a closed convex polyhedron, Nagumo's theorem implies that \(\mathcal{R}_{\mathrm{FI}}^{(i)}\) is forward invariant~\cite{nagumo1942lage,blanchini1999set}.
\end{proof}

\begin{proposition}\label{prop:feasibility}
For CSTLNs, conditions~\eqref{eq:alpha_1} -- \eqref{eq:alpha_3} are feasible only when plant parameter \(\epsilon \in [\tfrac{1}{2},1)\).
\end{proposition}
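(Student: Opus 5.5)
The plan is to prove necessity only: assume some pair $(\alpha,\beta)\in\mathbb{R}_{\ge0}^2$ satisfies \eqref{eq:alpha_1}--\eqref{eq:alpha_3}, and derive $\epsilon\ge\tfrac12$. Write $u:=1-\epsilon\in(0,1)$. Then the first two conditions read $\alpha\le c-2u\beta$ and $\beta\ge c-u\alpha$, and the third gives $\alpha\ge c/(1+\delta)>0$.

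\textbf{Step 1 (eliminate $\beta$).} Since $u>0$, \eqref{eq:alpha_1} is equivalent to $\beta\le (c-\alpha)/(2u)$. Chaining this with \eqref{eq:alpha_2} gives
\[
c-u\alpha\;\le\;\beta\;\le\;\frac{c-\alpha}{2u}.
\]
Multiplying through by $2u>0$ yields the single necessary condition
\[
\alpha\,(1-2u^2)\;\le\;c\,(1-2u).
\]

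\textbf{Step 2 (a crude upper bound).} Because $\beta\ge0$ and $u>0$, condition \eqref{eq:alpha_1} also gives $\alpha\le c$.

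\textbf{Step 3 (contradiction when $\epsilon<\tfrac12$).} Suppose $\epsilon<\tfrac12$, i.e.\ $u\in(\tfrac12,1)$, so the right-hand side $c(1-2u)$ is strictly negative. I split on the sign of $1-2u^2$.
\begin{enumerate}
\item If $u<1/\sqrt2$, then $1-2u^2>0$. The condition from Step 1 forces $\alpha\le c(1-2u)/(1-2u^2)<0$, which contradicts $\alpha\ge c/(1+\delta)>0$.
\item If $u=1/\sqrt2$, the left-hand side is $0$ while the right-hand side is negative, which is impossible.
\item If $u>1/\sqrt2$, divide by $1-2u^2<0$ to get $\alpha\ge c\,(2u-1)/(2u^2-1)$. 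For $u<1$ we have $2u-1>2u^2-1>0$, because $2u^2<2u$. So this lower bound strictly exceeds $c$, contradicting Step 2.
\end{enumerate}
In every case there is a contradiction, so feasibility forces $u\le\tfrac12$, i.e.\ $\epsilon\in[\tfrac12,1)$.

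\textbf{Main obstacle.} The only delicate point is the last case, where the combined inequality alone does not contradict $\alpha>0$. There the extra bound $\alpha\le c$, which uses $\beta\ge0$, is what closes the argument. I would check carefully that the direction of each inequality is preserved when dividing by $u$ and by $1-2u^2$.
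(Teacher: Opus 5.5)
Your proof is correct and complete. The paper states this proposition without proof. The proof of Lemma~\ref{lemma:RI_contains_Eq} uses the bound $\alpha_{\max}=c(1-2\epsilon)/(1+2\epsilon^2-4\epsilon)$, which is your Step~1 inequality $\alpha(1-2u^2)\le c(1-2u)$ solved for $\alpha$, since $1-2u^2=-(1+2\epsilon^2-4\epsilon)$ and $1-2u=2\epsilon-1$. So you take the route the paper implicitly intends.

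What you add is the sign analysis the paper skips. For $\epsilon<1-1/\sqrt2$ the coefficient of $\alpha$ changes sign, and the eliminated inequality alone no longer contradicts $\alpha>0$. Your extra bound $\alpha\le c$, which comes from $\beta\ge 0$, is what closes that case.

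Your case~1 reasoning also covers $u=\tfrac12$. There $1-2u^2=\tfrac12>0$ and the right-hand side is $0$, so $\alpha\le 0$, which contradicts $\alpha\ge c/(1+\delta)>0$. Hence the conditions are infeasible at $\epsilon=\tfrac12$ as well. The proposition's closed endpoint is a valid necessary condition, but it is not sharp.
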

Next, we show that each forward-invariant set contains two attractors and one saddle point.
\begin{lemma}\label{lemma:RI_contains_Eq}
For CSTLNs with \(n \ge 3\) and integer \(i=2,\dots,n-1\), if a forward-invariant set \(\mathcal{R}_{\mathrm{FI}}^{(i)}\) exists for some \(\alpha,\beta\) satisfying~\eqref{eq:alpha_1}--\eqref{eq:alpha_3}, then \(\mathcal{R}_{\mathrm{FI}}^{(i)}\) contains the attractor in \(C_{\{i-1,i\}}\), the attractor in \(C_{\{i,i+1\}}\), and the saddle point in \(C_{\{i-1,i,i+1\}}\).
\end{lemma}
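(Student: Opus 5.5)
The plan is to verify membership directly. We already have explicit formulas for the three equilibria: the attractors in $C_{\{i-1,i\}}$ and $C_{\{i,i+1\}}$ from Lemma~\ref{lemma:stb_eq}, and the saddle $x^\dagger$ from Lemma~\ref{lemma:tpl_sln}. For each point we compute $y(x)=Wx+c\mathbf{1}_n$ and check the three families of inequalities in \eqref{eq:RFI}: $y_i\ge\alpha$, $y_{i\pm1}\le\beta$, and $y_k\le 0$ for $k\notin\{i-1,i,i+1\}$.

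\textbf{The conditions $y_k\le 0$.} These are almost free. Each of the three equilibria lies in a cell whose support is contained in $\{i-1,i,i+1\}$, and Lemmas~\ref{lemma:stb_eq} and \ref{lemma:tpl_sln} already show $y_k<0$ for every off-support $k$.
\begin{itemize}
\item For $x^\dagger$, the off-support indices are exactly those outside $\{i-1,i,i+1\}$, so nothing more is needed.
\item For the attractor $x^*$ in $C_{\{i,i+1\}}$, the remaining indices $k\notin\{i-1,i,i+1\}$ are off-support, so $y_k(x^*)<0$ again follows from Lemma~\ref{lemma:stb_eq}.
\item The attractor in $C_{\{i-1,i\}}$ is handled symmetrically.
\end{itemize}

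\textbf{The attractors.} For $x^*\in C_{\{i,i+1\}}$, with $x_i^*=x_{i+1}^*=c/(2-\epsilon)$, we compute
\[
y_i(x^*)=y_{i+1}(x^*)=c+(-1+\epsilon)\tfrac{c}{2-\epsilon}=\tfrac{c}{2-\epsilon}, \qquad
y_{i-1}(x^*)=-\tfrac{c\delta}{2-\epsilon}<0\le\beta .
\]
So we need $\alpha\le \tfrac{c}{2-\epsilon}\le\beta$. For the upper bound, \eqref{eq:alpha_1} gives $\alpha\le c-2(1-\epsilon)\beta$, and combining with \eqref{eq:alpha_2}, $\beta\ge c-(1-\epsilon)\alpha$, yields
\[
\alpha\le c-2(1-\epsilon)c+2(1-\epsilon)^2\alpha,
\quad\text{i.e.}\quad
\alpha\bigl(1-2(1-\epsilon)^2\bigr)\le c(2\epsilon-1).
\]
I expect this to simplify to $\alpha\le c/(2-\epsilon)$ under $\epsilon\ge 1/2$, which is guaranteed by Proposition~\ref{prop:feasibility}. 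If it does not, the fallback is a direct argument: \eqref{eq:alpha_2} gives $\beta\ge c-(1-\epsilon)\alpha$, and substituting into \eqref{eq:alpha_1} gives $\alpha\le c-2(1-\epsilon)(c-(1-\epsilon)\alpha)$. For the lower bound, $\beta\ge c-(1-\epsilon)\alpha$ together with the bound on $\alpha$ gives $\beta\ge c-(1-\epsilon)c/(2-\epsilon)=c/(2-\epsilon)$.

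\textbf{The saddle.} For $x^\dagger$, since $x^\dagger=[y(x^\dagger)]_+$ on its support, we have $y_i(x^\dagger)=n$ and $y_{i\pm1}(x^\dagger)=m$. We therefore need $\alpha\le n=c(\delta+2\epsilon)/\Delta$ and $m=c\epsilon/\Delta\le\beta$. The second inequality should follow easily from $\beta\ge c-(1-\epsilon)\alpha$ and the upper bound on $\alpha$.

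The \textbf{main obstacle} is the first inequality, $\alpha\le n$. It has to be derived from the eliminated inequality $\alpha\bigl(1-2(1-\epsilon)^2\bigr)\le c(2\epsilon-1)$ for all $\delta>0$. Note that $n\ge c/2$ always, and $n\to c/(2-\epsilon)$-type values depending on $\delta$. So I will compare the derived upper bound on $\alpha$ with $\min_\delta n$, carefully tracking the sign of $1-2(1-\epsilon)^2$ for $\epsilon\in[1/2,1)$. If the bound on $\alpha$ alone is insufficient, I will try to also use \eqref{eq:alpha_3}. Failing that, I will handle the case by noting that the admissible $\alpha$ in \eqref{eq:feas} can be chosen at the lower end $c/(1+\delta)$, which is $\le n$ whenever $\Delta\le(1+\delta)(\delta+2\epsilon)$; that inequality holds since $\epsilon<1$.
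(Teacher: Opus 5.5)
Your route is the paper's: you check membership directly from the explicit equilibria using $\alpha\le\alpha_{\max}\le c/(2-\epsilon)\le\beta$, and your handling of the off-support constraints and of $y_{i-1}(x^*)\le\beta$ is correct. But the proposal stops at the step you call the main obstacle, $y_i(x^\dagger)=n\ge\alpha$. None of the strategies you list for it is carried out, and the last one is not valid.

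Choosing $\alpha$ at the lower end $c/(1+\delta)$ would only show that \emph{some} admissible set contains $x^\dagger$. The lemma is needed, and proved in the paper, for the set built from whatever feasible $(\alpha,\beta)$ is given. This matters downstream: in Theorems~\ref{thm:local_ROA} and~\ref{thm:LP}, $\alpha_j,\beta_j$ are arbitrary feasible values. The inequality you invoke there is also misjustified. $\Delta\le(1+\delta)(\delta+2\epsilon)$ is equivalent to $2\epsilon(1-\epsilon)\le\delta(\delta+2\epsilon)$, which does not follow from $\epsilon<1$; it fails, e.g., for $\epsilon=\tfrac12$, $\delta=0.1$.

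The missing idea is to compare $n$ and $m$ with $c/(2-\epsilon)$ rather than with the raw bound on $\alpha$. Directly,
\[
n-\frac{c}{2-\epsilon}=\frac{c\,\delta(1-\epsilon)}{(2-\epsilon)\Delta}>0,
\qquad
\frac{c}{2-\epsilon}-m=\frac{c\,\bigl(\delta+\epsilon(2-\epsilon)\bigr)}{(2-\epsilon)\Delta}>0.
\]
Hence $y_i(x^\dagger)=n>c/(2-\epsilon)\ge\alpha$ and $y_{i\pm1}(x^\dagger)=m<c/(2-\epsilon)\le\beta$. This uses only the two bounds you already need for the attractor, and it is exactly how the paper concludes. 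In fact $n$ is increasing in $\delta$ and equals $c/(2-\epsilon)$ at $\delta=0$, so your ``$\min_\delta n$'' plan would have landed here. No appeal to \eqref{eq:alpha_3} or case analysis is needed.

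You should also make the attractor step precise. Eliminating $\beta$ does not ``simplify to'' $\alpha\le c/(2-\epsilon)$. It gives $\alpha\le\alpha_{\max}=c(2\epsilon-1)/(4\epsilon-1-2\epsilon^2)$, whose denominator is positive for $\epsilon\in[\tfrac12,1)$. Then $c/(2-\epsilon)-\alpha_{\max}=c(1-\epsilon)/\bigl[(2-\epsilon)(4\epsilon-1-2\epsilon^2)\bigr]\ge0$ gives the bound you want.
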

\begin{proof}
    We first show that the attractor \(x^* \in C_{\{i,i+1\}}\) lies in \(\mathcal{R}_{\mathrm{FI}}^{(i)}\). By Proposition~\ref{prop:feasibility} and~\eqref{eq:alpha_1}--\eqref{eq:alpha_2}, we have
    \begin{equation}\label{eq:reform_alpha}
        0 \le \alpha \le \alpha_{\max} = \frac{c(1-2\epsilon)}{1+2\epsilon^2-4\epsilon}
    \end{equation}
    By Lemma~\ref{lemma:stb_eq}, \(y_k^*=x_k^*=0\) for all \(k\notin\{i,i+1\}\), and \(y_{i-1}^*=x^*_{i-1}=0\le \beta\). Notice that,
    \begin{equation}\label{eq:yi_in_fi}
    y_i^*=x_i^*=\frac{c}{2-\epsilon}\ge \alpha_{\max}.
    \end{equation}
    Further, from~\eqref{eq:yi_in_fi} and~\eqref{eq:alpha_2},
    $
    \beta \ge c+(-1+\epsilon)\alpha \ge c+(-1+\epsilon)x_i^*=\frac{c}{2-\epsilon},
    $
    which implies
    \begin{equation}\label{eq:yi+1_in_fi}
    y_{i+1}^*=x_{i+1}^*=\frac{c}{2-\epsilon}\le \beta.
    \end{equation}
    Therefore, the attractor in \(C_{\{i,i+1\}}\) satisfies~\eqref{eq:RFI} and thus lies in \(\mathcal{R}_{\mathrm{FI}}^{(i)}\). Similar analysis holds for the attractor in $C_{\{i-1,i\}}$. For the saddle point \(x^\dagger \in C_{\{i-1,i,i+1\}}\), Lemma~\ref{lemma:tpl_sln} implies that \(y_k^*=x_k^\dagger=0\) for all \(k\notin\{i-1,i,i+1\}\). Together with the constraints in Definition~\ref{def:cstlns} and~\eqref{eq:yi_in_fi}--\eqref{eq:yi+1_in_fi}, we have
    \begin{equation}\label{eq:saddle_fi}
        x_i^\dagger = y^\dagger_i > \frac{c}{2-\epsilon}\ge\alpha_{\max}, \quad x_{i\pm1}^\dagger =y^\dagger_{i \pm 1} < \frac{c}{2-\epsilon}\le \beta.
    \end{equation}
     Therefore, the saddle point in $C_{\{i-1,i,i+1\}}$ lies in $\mathcal{R}_{\mathrm{FI}}^{(i)}$. 
\end{proof}

Next, we identify a hyperplane that divides \(\mathcal{R}_{\mathrm{FI}}^{(i)}\) into two sub-regions, each lying in the ROA of one attractor. 
\begin{lemma}\label{lemma:phi_dynamics}
For $i = 2,\dots,n-1$, let $A_{T}$\footnote{For notational simplicity, we drop the dependence of \(T\) and left eigenvector \(w\) on \(i\). In Theorem~\ref{thm:local_ROA}, we also drop the dependence of \(\mathcal{R}_{\mathrm{FI}}\) on \(i\).} be the Jacobian of dynamics in the triple-support cell $C_{\{i-1,i,i+1\}}$,
and let $w$ be a left eigenvector of $A_T$ associated with the positive eigenvalue $\delta$.
Define $\phi(x):=w^\top x$.
Then along any trajectory $x(t)$,
\begin{equation}\label{eq:Obs}
\dot\phi(x)=
\begin{cases}
\delta\,\phi(x), & x(t)\in C_{\{i-1,i,i+1\}},\\
-\phi(x), & x(t)\in C_{\{i\}}.
\end{cases}    
\end{equation}
\end{lemma}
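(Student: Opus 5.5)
The plan is to work directly with the vector field $\dot x = -x + [y(x)]_+$. I would first compute $w$ explicitly and then evaluate $\dot\phi = w^\top \dot x = -\phi(x) + w^\top [y(x)]_+$ cell by cell. The one structural fact needed is that $w$ vanishes in the $i$-th coordinate and is orthogonal to the bias on the support.

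\emph{Step 1: compute $w$.} In the permuted coordinates of~\eqref{eq:subsys_on_supp}, $A_T=\tilde A_\sigma$ with $\sigma=\{i-1,i,i+1\}$. This matrix is block upper triangular, with blocks $-I+W_{\sigma\sigma}$ and $-I$.

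Write $w=[w_\sigma; w_{\bar\sigma}]$. The left-eigenvector equation $w^\top A_T=\delta w^\top$ splits into two conditions:
\[
w_\sigma^\top(-I+W_{\sigma\sigma})=\delta w_\sigma^\top, \qquad w_{\bar\sigma}=\tfrac{1}{1+\delta}W_{\sigma\bar\sigma}^\top w_\sigma .
\]
The second condition is solvable because $\delta\neq -1$.

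The matrix~\eqref{eq:triple_jss} is symmetric. A direct check shows $w_\sigma=[1,0,-1]^\top$ is an eigenvector for $\delta$: the first row gives $-1+(1+\delta)=\delta$, the middle row gives $0$, and the last row gives $-\delta$. By Lemma~\ref{lemma:tpl_sln} the eigenvalue $\delta$ is simple, so this fixes $w$ up to scaling. The only feature of $w_{\bar\sigma}$ used below is that it is finite. In original coordinates, $w$ therefore satisfies $w_{i-1}=1$, $w_i=0$ and $w_{i+1}=-1$.

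\emph{Step 2: the triple-support cell.} For $x\in C_{\{i-1,i,i+1\}}$ the dynamics are affine, $\dot{\tilde x}=A_T\tilde x+[\theta_\sigma;0]$. Hence
\[
\dot\phi = w^\top A_T \tilde x + w_\sigma^\top\theta_\sigma = \delta\,\phi + c\,w_\sigma^\top\mathbf 1_3 .
\]
The second term vanishes because $w_\sigma^\top\mathbf 1_3 = 1+0-1=0$. This gives the first case of~\eqref{eq:Obs}. Equivalently, one can use the formula $\dot\phi=-\phi+w^\top[y]_+$ directly, since $[y]_+$ is supported on $\sigma$ in this cell.

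\emph{Step 3: the single-support cell.} For $x\in C_{\{i\}}$ we have $[y(x)]_+ = y_i(x)\,e_i$. Therefore $w^\top[y]_+ = w_i\,y_i = 0$ by Step 1, and $\dot\phi=-\phi$.

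\emph{Main obstacle.} The delicate point is being careful about the left eigenvector of the non-symmetric, permuted $A_T$ rather than the eigenvector of the symmetric block alone. One has to verify two things: that the off-support components $w_{\bar\sigma}$ do not spoil the computation, which holds because only $w_\sigma$ multiplies the bias and the active rectified components; and that $w_i=0$ survives the permutation back to original coordinates. I would also note that on cell boundaries both formulas agree by continuity of $\phi$ along trajectories, so the piecewise statement is consistent.
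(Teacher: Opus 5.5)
Your proposal is correct and is essentially the paper's argument. You identify the on-support block of the left eigenvector as a multiple of $[1,0,-1]^\top$, use $w_\sigma^\top\theta_\sigma=0$ to get $\dot\phi=\delta\phi$ in $C_{\{i-1,i,i+1\}}$, and use $w_i=0$ to remove the only active rectified term in $C_{\{i\}}$; you are somewhat more explicit than the paper about $w_{\bar\sigma}=\tfrac{1}{1+\delta}W_{\sigma\bar\sigma}^\top w_\sigma$ and the simplicity of $\delta$.

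Your closing boundary remark can be made precise. With the normalization $w_{i-1}=1$ one has $\phi(x)=\bigl(y_{i-1}(x)-y_{i+1}(x)\bigr)/(1+\delta)$, so $\phi=0$ on the shared boundary $y_{i\pm1}=0$ and the two formulas trivially coincide there.
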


\begin{proof}
    (1) In each triple-support cell, let $T =\{i-1,i,i+1\}$ for $i = 2, \cdots, n-1$, and \(J_{TT}:=-I_{|T|}+W_{TT}\), given explicitly in~\eqref{eq:triple_jss}. The dynamics in the triple-support cell with reordered states (as in~\eqref{eq:subsys_on_supp}) is
\begin{equation}
\label{eq:tri_cell_dyn}
\frac{d\tilde x}{dt}
= \tilde A_T\tilde x + \begin{bmatrix} \theta_{T} \\ 0 \end{bmatrix},
\quad
\tilde A_T =
\begin{bmatrix}
J_{TT} & W_{T\bar{T}} \\
0 & - I_{n-|T|}
\end{bmatrix}.
\end{equation}
Let \(\widetilde w^\top\) be a left eigenvector of \(\widetilde A_T\) corresponding to \(\delta\), and partition it as \(\widetilde w^\top=[\,w_u^\top\;\;v^\top\,]=[\,u_{i-1},u_i,u_{i+1},v^\top\,]\). Therefore,
\begin{equation}\label{eq:permuted}
w_u^\top J_{TT}=\delta\,w_u^\top, \qquad
w_u \in \mathrm{span}\{[1,0,-1]^\top\}.
\end{equation}
Substituting
\(\tilde w^\top \tilde A_T=\delta\,\tilde w^\top\) and
\(\tilde w^\top\!\begin{bmatrix}\theta_T\\0\end{bmatrix}=0\) into the following:
    \begin{equation}
    \dot{\tilde \phi}
    := \tilde w^\top \dot{\tilde x}
    = \tilde w^\top\!\left(\tilde A_T \tilde x + \begin{bmatrix}\theta_{T}\\[2pt] 0\end{bmatrix}\right) = \delta \tilde w^\top\tilde x = \delta\tilde\phi.
    \end{equation}
    In the original coordinates, this becomes:
    \begin{subequations}\label{eq:phi_backtransform}
    \begin{align}
    \dot\phi(t)
    &= \dot{\tilde\phi}(t)
    = \delta\,\tilde w^\top \tilde x(t)
    = \delta\,\tilde w^\top \Pi x(t), \label{eq:phi_backtransform_a}\\
    &= \delta\,(\Pi^\top \tilde w)^\top x(t)
    = \delta\,w^\top x(t)
    = \delta\,\phi(t). \label{eq:phi_backtransform_b}
    \end{align}
    \end{subequations}
    In \(C_{\{i\}}\) dynamics, the Jacobian is \(-I + D_{\{i\}}\widetilde W\), where \(D_{\{i\}}\) selects the row of \(\widetilde W\) corresponding to coordinate \(i\) and zeros out all others. Since \(u_i=0\) in~\eqref{eq:permuted}, we obtain \(\tilde w^\top D_{\{i\}}\widetilde W=0\). Therefore,
\begin{subequations}\label{eq:phi_dot_single}
\begin{align}
\dot{\tilde \phi}
&= \tilde w^\top \dot{\tilde x}
= \tilde w^\top\!\left[\left(-I + D_{\{i\}}\widetilde W\right)\tilde x
+ [0,\theta_i,0,\mathbf{0}_{n-3}^\top]^\top\right], \label{eq:phi_dot_single_a}\\
&= - \tilde w^\top \tilde x = -\tilde \phi. \label{eq:phi_dot_single_b}
\end{align}
\end{subequations}
    In the original coordinates, this becomes $\dot\phi(x)=-\,\phi(x)$.
\end{proof}

\begin{theorem}\label{thm:local_ROA}
Let \(\phi(x):=w^\top x\) as in~\eqref{eq:Obs}. Then, in each \(\mathcal{R}_{\mathrm{FI}}\),
\begin{enumerate}
    \item \(\{x:\phi(x)>0\}\cap\mathcal{R}_{\mathrm{FI}}\) is an inner approximation of the ROA of the attractor in \(C_{\{i-1,i\}}\).
    \item \(\{x:\phi(x)<0\}\cap\mathcal{R}_{\mathrm{FI}}\) is an inner approximation of the ROA of the attractor in \(C_{\{i,i+1\}}\).
    \item \(\{x:\phi(x)=0\}\cap\mathcal{R}_{\mathrm{FI}}\) is an invariant hyperplane separating the two ROAs.
\end{enumerate}
\end{theorem}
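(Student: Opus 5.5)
The plan is to use forward invariance of $\mathcal{R}_{\mathrm{FI}}$ (Theorem~\ref{thm:forward_invariant}) together with the exact linear dynamics of $\phi$ from Lemma~\ref{lemma:phi_dynamics}. The first task is to see which cells a trajectory inside $\mathcal{R}_{\mathrm{FI}}$ can visit. In $\mathcal{R}_{\mathrm{FI}}$ we have $y_i\ge\alpha\ge 0$ and $y_k\le 0$ for $k\notin\{i-1,i,i+1\}$. Hence $\mathcal{R}_{\mathrm{FI}}$ is covered by the four cells $C_{\{i\}}$, $C_{\{i-1,i\}}$, $C_{\{i,i+1\}}$, and $C_{\{i-1,i,i+1\}}$, which differ only in the signs of $y_{i\pm1}$.

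Next I extend Lemma~\ref{lemma:phi_dynamics} to the two double-support cells. By \eqref{eq:permuted}, $w$ has on-support components proportional to $[1,0,-1]$, and $\dot\phi=w^\top(-x+[y]_+)=-\phi+w^\top[y]_+$.
\begin{itemize}
\item Since $w$ is a left eigenvector with off-support part determined by $\tilde w^\top\tilde A_T=\delta\tilde w^\top$, we get $\dot\phi=-\phi+u([y_{i-1}]_+-[y_{i+1}]_+)+v^\top[y_{\bar T}]_+$.
\item The last term vanishes in $\mathcal{R}_{\mathrm{FI}}$ because $y_{\bar T}\le 0$ there.
\end{itemize}
So within $\mathcal{R}_{\mathrm{FI}}$, $\dot\phi=-\phi+u([y_{i-1}]_+-[y_{i+1}]_+)$, where I normalize $w_u=u[1,0,-1]^\top$ with the sign of $u$ chosen to match the orientation in the statement. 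The quantity $[y_{i-1}]_+-[y_{i+1}]_+$ is itself a linear function in each cell. Thus the claim reduces to understanding one scalar together with the sign structure of $y_{i\pm1}$.

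For part 3, I show that $\phi=0$ forces $\dot\phi=0$ on $\mathcal{R}_{\mathrm{FI}}$. Lemma~\ref{lemma:phi_dynamics} gives $\dot\phi=\delta\phi$ in the triple cell and $\dot\phi=-\phi$ in $C_{\{i\}}$. These combine into a piecewise-linear, Lipschitz scalar ODE, and the set $\{\phi=0\}$ is invariant by uniqueness. The symmetric structure (reflection $i-1\leftrightarrow i+1$ maps $\phi\to-\phi$) suggests that $\phi=0$ holds exactly on the symmetric states where $y_{i-1}=y_{i+1}$.

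The key identity I want is that $\operatorname{sign}\phi=\operatorname{sign}(x_{i-1}-x_{i+1})$ up to orientation. I will try to check this by using $y_{i-1}-y_{i+1}=(-1-\delta-(-1+\epsilon))(x_{i+1}-x_{i-1})$ plus contributions from far nodes. With it, the three sign regions and invariance of $\{\phi=0\}$ follow. Then $\{\phi>0\}\cap\mathcal{R}_{\mathrm{FI}}$ and $\{\phi<0\}\cap\mathcal{R}_{\mathrm{FI}}$ are each forward invariant, since trajectories cannot cross the invariant hyperplane and cannot leave $\mathcal{R}_{\mathrm{FI}}$.

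For parts 1 and 2, I establish convergence inside each invariant half. Take $\phi>0$. I claim the trajectory cannot remain in the triple or single cells forever.
\begin{itemize}
\item In the triple cell, $\phi$ grows like $e^{\delta t}$, which contradicts boundedness of $\mathcal{R}_{\mathrm{FI}}\cap\{\phi>0\}$ on the relevant coordinates (TLN trajectories are bounded).
\item In $C_{\{i\}}$, the dynamics are linear with Hurwitz matrix and equilibrium $x_i=c$ with $y_{i\pm1}=c-(1-\epsilon)c>0$. That equilibrium lies outside $C_{\{i\}}$, so the trajectory exits in finite time.
\end{itemize}
The trajectory eventually enters the double cell on the side where $\phi>0$, namely $C_{\{i-1,i\}}$. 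There the dynamics are affine and Hurwitz (Lemma~\ref{lemma:stb_eq}), with the attractor inside the cell. It remains to rule out infinite switching between cells. I would use a LaSalle-type argument with the piecewise Lyapunov/energy function available for symmetric $W$: the TLN energy $E(x)$ decreases along trajectories, so the omega-limit set consists of equilibria in $\mathcal{R}_{\mathrm{FI}}\cap\{\phi>0\}$. By Lemma~\ref{lemma:RI_contains_Eq} and uniqueness per cell, the only equilibria in $\mathcal{R}_{\mathrm{FI}}$ are the two attractors and the saddle. The saddle has $\phi=0$ by symmetry, so the omega-limit set is the attractor in $C_{\{i-1,i\}}$. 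The same argument handles $\phi<0$.

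The main obstacle I expect is the sign identification of $\phi$ with $x_{i-1}-x_{i+1}$. Related to it is controlling $\dot\phi$ in the double-support cells, which Lemma~\ref{lemma:phi_dynamics} does not cover. If the sign identity fails, I would fall back on showing directly that $\phi\,\dot\phi$ has the appropriate sign on the hyperplane's neighborhood in each of the four cells.
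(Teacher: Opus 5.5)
\textbf{Gap 1: the key identity.} The step you single out as key fails as stated for $n\ge 4$. $\phi$ is not a signed multiple of $x_{i-1}-x_{i+1}$, because the off-support part of the eigenvector is nonzero. From $\tilde w^\top\tilde A_T=\delta\tilde w^\top$ one gets $v^\top=w_u^\top W_{T\bar T}/(1+\delta)$, hence
\[
\phi(x)=u\Bigl[x_{i-1}-x_{i+1}+\tfrac{\epsilon+\delta}{1+\delta}\,(x_{i-2}-x_{i+2})\Bigr]
\]
(terms with indices outside $[n]$ omitted). For a concrete counterexample, take $x^\dagger+t\,e_{i-2}$ (or $e_{i+2}$ if $i=2$) with small $t>0$. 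It lies in $\mathcal{R}_{\mathrm{FI}}$, since every defining inequality is strict at $x^\dagger$, yet it has $x_{i-1}=x_{i+1}$ and $\phi\neq0$. Your coefficient is also wrong: in $y_{i-1}-y_{i+1}$ the coefficient of $x_{i-1}-x_{i+1}$ is $1+\delta$ (because $W_{kk}=0$), while $\epsilon+\delta$ multiplies $x_{i-2}-x_{i+2}$.

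The identity you actually need is the one you guessed just before. Write $A_T=-I+D_TW$ with $D_T$ the diagonal selector of $T$. Then $w^\top A_T=\delta w^\top$ is equivalent to $u(e_{i-1}-e_{i+1})^\top W=(1+\delta)w^\top$, so
\[
(1+\delta)\,\phi(x)=u\bigl(y_{i-1}(x)-y_{i+1}(x)\bigr)\qquad\text{for all } x\in\mathbb{R}^n .
\]
Insert this into your (correct) formula $\dot\phi=-\phi+u([y_{i-1}]_+-[y_{i+1}]_+)$ on $\mathcal{R}_{\mathrm{FI}}$, and write $[a]_+-[b]_+=\lambda(a-b)$ with $\lambda\in[0,1]$. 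This gives $\dot\phi=\mu\phi$ with $\mu=\lambda(1+\delta)-1\in[-1,\delta]$. Hence $\phi(t)^2\ge\phi(0)^2e^{-2t}$, and the sign of $\phi$ is preserved along every trajectory in $\mathcal{R}_{\mathrm{FI}}$. This includes passages through $C_{\{i-1,i\}}$ and $C_{\{i,i+1\}}$, which Lemma~\ref{lemma:phi_dynamics} does not cover. The same identity shows $\{\phi=0\}\cap\mathcal{R}_{\mathrm{FI}}\subset C_{\{i\}}\cup C_{\{i-1,i,i+1\}}$, which the paper's one-line invariance claim tacitly uses.

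\textbf{Gap 2: excluding the saddle.} Your argument here is not valid as written. The $\omega$-limit set of a trajectory in $\{\phi>0\}\cap\mathcal{R}_{\mathrm{FI}}$ is only contained in the closure $\{\phi\ge0\}\cap\mathcal{R}_{\mathrm{FI}}$, and $x^\dagger$ lies in that closure precisely because $\phi(x^\dagger)=0$. What rules the saddle out is your own first bullet. By Lemma~\ref{lemma:tpl_sln}, $x^\dagger$ lies in the interior of $C_{\{i-1,i,i+1\}}$. A trajectory converging to it would eventually stay in that cell, where $|\phi(t)|=|\phi(t_0)|e^{\delta(t-t_0)}$ with $\phi(t_0)\neq0$ cannot tend to $0$. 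The attractor in $C_{\{i,i+1\}}$ has $\phi<0$ and is excluded by sign preservation.

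Also, the energy $V(x)=\tfrac12x^\top(I-W)x-\theta^\top x$ is nonincreasing only where $x\ge0$: the term $-x_j^2+x_jy_j$ is positive when $y_j<x_j<0$, and $\mathcal{R}_{\mathrm{FI}}$ contains such points. As in the paper, work on $\mathcal{R}_{\mathrm{FI}}\cap\mathcal{B}$. Note that negative coordinates decay at least like $e^{-t}$, so the positive part of $\dot V$ is integrable and the LaSalle argument survives.

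\textbf{Comparison with the paper.} With these repairs your route is a genuine alternative. The paper gets invariance of the hyperplane from Lemma~\ref{lemma:phi_dynamics} and runs the same LaSalle argument on $\Omega$. It then invokes Hirsch's measure-zero result for the saddle's stable set, asserts without proof that this set is the hyperplane, and concludes only for almost every initial condition. Your sign-preservation argument is more elementary. It yields convergence for every initial condition with $\phi\neq0$, which is what an inner ROA approximation requires, and it justifies the identification the paper asserts.
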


\begin{proof}
    (1) By Lemma~\ref{lemma:phi_dynamics}, \(\phi(x)=0\) implies \(\dot\phi(x)=0\). Therefore, the hyperplane \(\{\phi=0\}\) is invariant in \(\mathcal{R}_{\mathrm{FI}}\). Since the dynamics admit unique solutions, trajectories cannot cross this hyperplane. Additionally,~\cite[Lemma~2.1]{morrison2016diversity} shows that \(\mathcal{B}=\prod_{i=1}^n[0,\theta_i]\) is globally attracting for the competitive TLN. Thus, any trajectory starting in \(\mathcal{R}_{\mathrm{FI}}\) eventually enters and remains in the following compact set:
    \begin{equation}
        \Omega =\{x \in \mathbb{R}^n :\mathcal{R}_{\mathrm{FI}}\ \cap\ \mathcal{B}\}.
    \end{equation}
    Consider the following energy function on $\Omega$:
    \begin{equation}
        V(x) := \frac{1}{2}x^\top(I-W)x - \theta^\top x
    \end{equation}
    Since $W$ is symmetric, we have:
    \begin{subequations}\label{eq:V_deriv}
    \begin{align}
    \nabla V(x)
    &= x - \frac{1}{2}(W + W^\top)x - \theta = x-(Wx+\theta) \label{eq:V_deriv_b}\\
    \dot V
    &= \nabla V(x)^\top \dot x
    = (x-y)^\top(-x+[y]_+) \label{eq:V_deriv_c}\\
    &= \sum_{i=1}^n (x_i-y_i)\,(-x_i+[y_i]_+) \label{eq:V_deriv_d}.
    \end{align}
    \end{subequations}
    Then, consider the following two cases:
    \begin{enumerate}
    \renewcommand{\labelenumi}{(\alph{enumi})}
    \item If \(y_i>0\), then \((x_i-y_i)(-x_i+[y_i]_+) = -(x_i-y_i)^2 \le 0\).
    \item If \(y_i\le 0\), then \([y_i]_+=0\) and \((x_i-y_i)(-x_i) = -x_i^2+y_i x_i \le 0\), since \(x_i\ge 0\) on \(\Omega\) and \(y_i\le 0\).
    \end{enumerate}
    Therefore, \(\dot V(x)\le 0\) for all \(x\in\Omega\). Since \(\Omega\) is compact and forward invariant, LaSalle's invariance principle implies that every trajectory with \(x(0)\in\Omega\) converges to the largest invariant set in \(\{x\in\Omega:\dot V(x)=0\}\). By~\cite[Cor.~2.2]{morrison2016diversity}, all equilibria of CSTLNs lie in \(\mathcal{B}\). Together with Lemma~\ref{lemma:RI_contains_Eq}, this implies that \(\Omega\) contains two attractors and one saddle point, and every trajectory converges to an equilibrium as \(t\to\infty\). By Lemma~\ref{lemma:tpl_sln}, the saddle in \(C_{\{i-1,i,i+1\}}\) is hyperbolic. By~\cite[Sec.~5]{hirsch1989convergent}, the set of initial conditions converging to an unstable hyperbolic equilibrium is a lower-dimensional smooth manifold and therefore has Lebesgue measure zero. In \(\Omega\), this manifold is the hyperplane \(\phi(x)=0\). Therefore, for almost every initial condition in \(\Omega\), trajectories with \(\phi(x)>0\) converge to the attractor in \(C_{\{i-1,i\}}\), while trajectories with \(\phi(x)<0\) converge to the attractor in \(C_{\{i,i+1\}}\).
\end{proof}

\begin{figure}[h]
    \centering
    \includegraphics[width=0.9\linewidth]{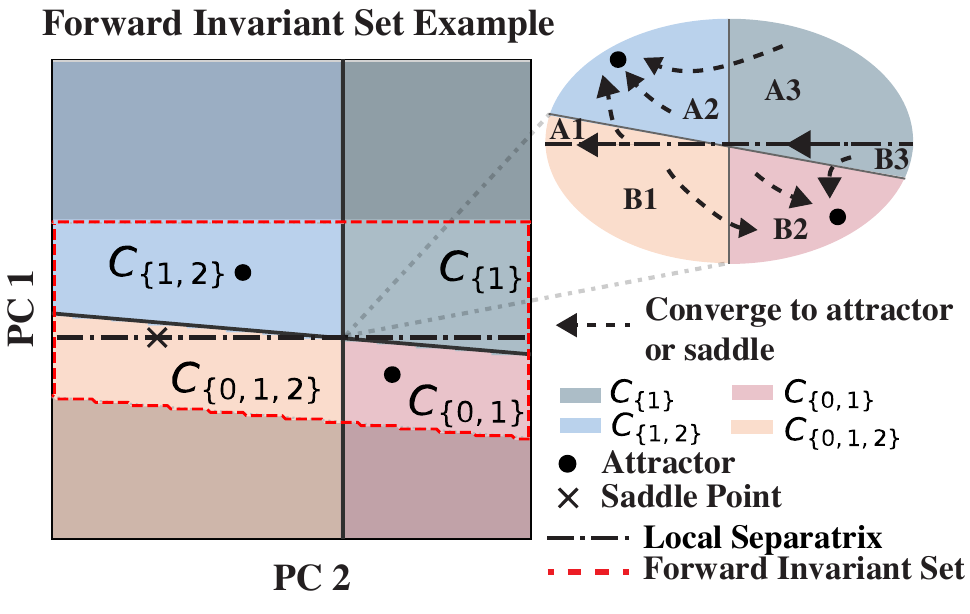}
    \caption{Illustrative example of a forward invariant set in 4D CSTLN projected onto 2D via PCA. The forward invariant set (contained by the red dashed line) contains two ROAs corresponding to two different attractors. Two ROAs are separated by the black dashed line.}
    \label{fig:cell}
\end{figure}
Figure~\ref{fig:cell} shows a forward-invariant set for a 4-dimensional CSTLN with \(\epsilon=0.7\), \(\delta=2.5\), and \(c=1\). For visualization, the 4D space is projected into 2D via principal component analysis (PCA). In this example, regions \(B1\) to \(B3\) form the ROA of the attractor in \(C_{\{0,1\}}\), whereas regions \(A1\) to \(A3\) form to the ROA of the attractor in \(C_{\{1,2\}
}\). Notice that some attractors, such as the one in \(C_{\{1,2\}}\), have both left (from $\mathcal{R}_{\mathrm{FI}}^{(1)}$) and right (from $\mathcal{R}_{\mathrm{FI}}^{(2)}$) ROA polyhedra, which together form the total ROA estimate. With these approximations, we next show that Theorem~\ref{thm:local_ROA} leads to a LP for computing a larger noise bound after the learning.

\begin{figure*}[ht] 
    \centering
    \includegraphics[width=0.99\linewidth]{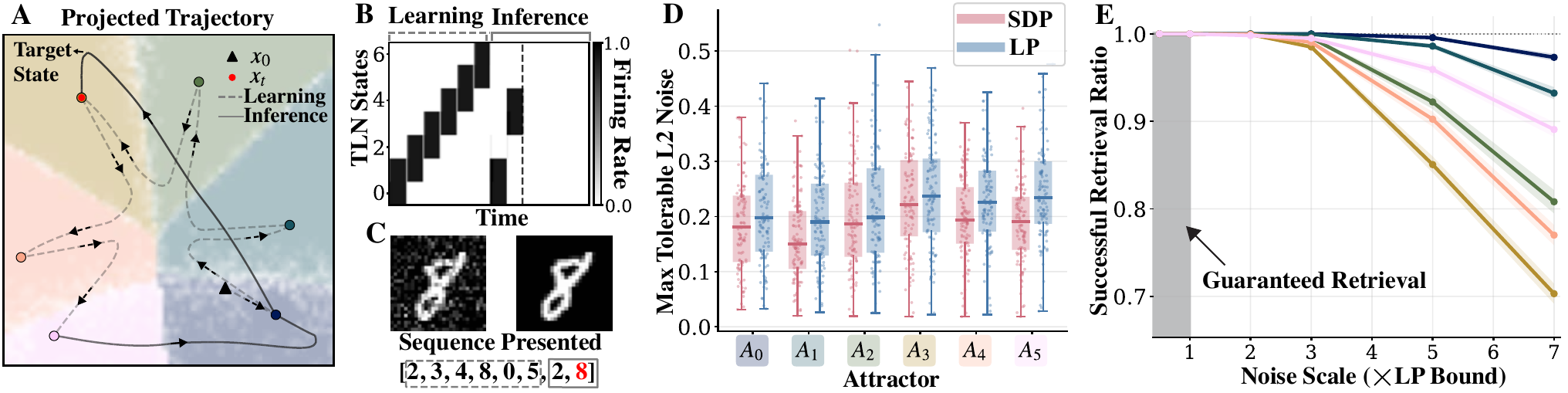}
    \caption{Simulation of 7-dimensional TLN-based memory system and noise robustness analysis on the MNIST dataset. A. Trajectories visualized using projection into 2D space. Dashed lines show trajectories during learning; solid lines show trajectories during inference. The background colors show the true ROA from simulations. B. TLN firing rates over time. The black dashed line marks the end of the second inference pattern (image with number 8). C. The system reconstructs the noisy input pattern. D. Comparison of input-noise robustness for the two methods over 108 encoders and decoders corresponding to different learning sequences randomly sampled from MNIST. E. Empirical simulations show that the system begins to fail to converge to the correct attractor when the noise level exceeds approximately twice the bound found by the LP method, where the bound is taken as the median value in panel D.}
    \label{fig:simulation}
\end{figure*}

\begin{theorem}[LP-based robustness certification]\label{thm:LP}
Let \(P^{(m)}\) be a learned pattern associated with an attractor
\(x^{*,m}\in C_{\{i,i+1\}}\) for some \(i\in\{1,\dots,n-1\}\). 
Let \(\mathcal P_{m}^{\mathrm L}\) and \(\mathcal P_{m}^{\mathrm R}\) denote the left and right ROA polyhedra for \(x^{*,m}\) from $\mathcal{R}_{\mathrm{FI}}^{(i)}$ and $\mathcal{R}_{\mathrm{FI}}^{(i+1)}$, respectively\footnote{If \(x^{*,m}\) is a boundary attractor, that is, in \(C_{\{0,1\}}\) or \(C_{\{n-1,n\}}\), It has only one ROA polyhedral approximation.}. For \(s\in\{\mathrm L,\mathrm R\}\), write
\[
\mathcal P_m^s = \{x\in\mathbb R^n : A_m^s x \le b_m^s\},
\]
where \(A_m^s\in\mathbb{R}^{M_m^s\times n}\) and \(b_m^s\in\mathbb{R}^{M_m^s}\) collect the constraints defining the forward-invariant set in Theorem~\ref{thm:forward_invariant} together with the separating half-space from Theorem~\ref{thm:local_ROA}, written below:
\begin{subequations}\label{eq:poly_def}
\begin{align}
&\alpha_j \le c + 2(-1+\epsilon)\beta_j, \label{eq:poly_alpha_1}\\
&(-1+\epsilon)\alpha_j + c \le \beta_j, \label{eq:poly_alpha_2}\\
&\alpha_j \ge \frac{c}{1+\delta}, \label{eq:poly_alpha_3}\\
&\phi_m^s(x) \le 0, \quad x\in\mathcal P_m^s,\ \text{if } s=\mathrm L, \label{eq:poly_halfspace_L}\\
&\phi_m^s(x) \ge 0, \quad x\in\mathcal P_m^s,\ \text{if } s=\mathrm R, \label{eq:poly_halfspace_R}
\end{align}
\end{subequations}
with \(j=i\) for \(s=\mathrm L\) and \(j=i+1\) for \(s=\mathrm R\).
Assume there exists \(r_m^s>0\) such that the following linear program is feasible:
\begin{subequations}\label{eq:poly_lp}
\begin{align}
\underset{r_m^s}{\mathrm{maximize}} \quad & r_m^s
\label{eq:poly_lp_obj}\\
\text{subject to}\quad
& r_m^s \le
\frac{b_{m,\ell}^{s}-a_{m,\ell}^{s\top}W_E^\top P^{(m)}}
{\|a_{m,\ell}^{s\top}W_E^\top\|_2},
\; \ell=1,\dots,M_m^s,
\label{eq:poly_noise_bound}
\end{align}
\end{subequations}
where \(a_{m,\ell}^{s\top}\) is the \(\ell\)-th row of \(A_m^s\) and \(b_{m,\ell}^{s}\) is the \(\ell\)-th entry of \(b_m^s\). Furthermore, assume the input pattern to the memory system is the previously learned pattern $P^{(m)}$ corrupted with additive noise $\eta$. Then, as long as  \(\|\eta\|_2\le \min (r_m^R, r_m^L)\), the encoder will map the corrupted input pattern to a target state $x_{\mathrm{tar}}$ which lies in the ROA of $x^{*,m}$ (i.e., \(\mathcal P_{m}^{\mathrm L}\) $\cup$ \(\mathcal P_{m}^{\mathrm R}\)).
\end{theorem}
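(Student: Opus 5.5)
The plan is to reduce the claim to a robust point-in-polyhedron argument: check that the ball of radius $r_m^s$ around the noise-free target is mapped by the encoder into $\mathcal P_m^s$, and then invoke Theorem~\ref{thm:local_ROA} to identify $\mathcal P_m^s$ with a subset of the ROA of $x^{*,m}$.

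First, by linearity of the encoder, a corrupted input $P^{(m)}+\eta$ is mapped to
\[
x_{\mathrm{tar}}=W_E^\top P^{(m)}+W_E^\top\eta .
\]
Here $W_E^\top P^{(m)}=x^{*,m}$ by the update rule~\eqref{eq:weight_update}, since learned associations are frozen afterwards. Fix $s\in\{\mathrm L,\mathrm R\}$. Membership $x_{\mathrm{tar}}\in\mathcal P_m^s$ is equivalent to
\[
a_{m,\ell}^{s\top}W_E^\top\eta \le b_{m,\ell}^s-a_{m,\ell}^{s\top}W_E^\top P^{(m)}
\]
for every row $\ell$. By Cauchy--Schwarz,
\[
a_{m,\ell}^{s\top}W_E^\top\eta\le \|a_{m,\ell}^{s\top}W_E^\top\|_2\,\|\eta\|_2 .
\]
So if $\|\eta\|_2\le r_m^s$ and $r_m^s$ satisfies~\eqref{eq:poly_noise_bound}, every row inequality holds. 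This is the standard Chebyshev-ball-type LP. Rows with $a_{m,\ell}^{s\top}W_E^\top=0$ need separate treatment: they impose only $b_{m,\ell}^s\ge a_{m,\ell}^{s\top}x^{*,m}$, which holds because $x^{*,m}\in\mathcal P_m^s$ by Lemma~\ref{lemma:RI_contains_Eq}. Feasibility with $r_m^s>0$ also requires the numerators to be nonnegative, i.e.\ $x^{*,m}$ lies in the polyhedron; this is again guaranteed by Lemma~\ref{lemma:RI_contains_Eq}, together with the sign of $\phi$ at the attractor.

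Second, I will justify that $\mathcal P_m^s$ lies in the ROA of $x^{*,m}$. By Theorem~\ref{thm:forward_invariant}, the constraints~\eqref{eq:poly_alpha_1}--\eqref{eq:poly_alpha_3} make $\mathcal R_{\mathrm{FI}}^{(j)}$ forward invariant. The two cases follow from Theorem~\ref{thm:local_ROA}:
\begin{itemize}
\item For $j=i$ (left polyhedron), $x^{*,m}\in C_{\{i,i+1\}}$ is the right-hand attractor of $\mathcal R_{\mathrm{FI}}^{(i)}$, and its half-space is $\phi<0$, matching~\eqref{eq:poly_halfspace_L}.
\item For $j=i+1$, $x^{*,m}$ is the left-hand attractor of $\mathcal R_{\mathrm{FI}}^{(i+1)}$, matching~\eqref{eq:poly_halfspace_R}.
\end{itemize}
In both cases Theorem~\ref{thm:local_ROA} places $\mathcal P_m^s$ inside the ROA of $x^{*,m}$.

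Finally, I combine the two sides. If $\|\eta\|_2\le\min(r_m^{\mathrm L},r_m^{\mathrm R})$, then $x_{\mathrm{tar}}$ lies in both polyhedra, and a fortiori in their union. For a boundary attractor only one polyhedron exists, and the minimum is over that single radius. The controller of Section~\ref{subsec:controller} then places the state at $x_{\mathrm{tar}}$, and the autonomous CSTLN converges to $x^{*,m}$.

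The main obstacle is the non-strict inequality at the separating hyperplane. The closed set $\{\phi=0\}$ is invariant and flows to the saddle rather than to $x^{*,m}$, so the boundary point $\phi=0$ is not in the ROA. I would handle this in one of two ways: treat that constraint strictly, or note that the ball touches the hyperplane only on a measure-zero set, consistent with the ``almost every'' conclusion of Theorem~\ref{thm:local_ROA}. I would state the result with this caveat rather than attempt a stronger claim.
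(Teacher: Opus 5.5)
This is the paper's argument: substitute $x_{\mathrm{tar}}=W_E^\top(P^{(m)}+\eta)$ into each facet of $\mathcal P_m^s$ and bound $a_{m,\ell}^{s\top}W_E^\top\eta$ by Cauchy--Schwarz. Your additions are sound except the side claim $W_E^\top P^{(m)}=x^{*,m}$, which the learning rule~\eqref{eq:weight_update} does not give in general (rows updated for later patterns, and frozen rows shared with earlier ones, perturb it), so nonnegativity of the numerators must come from the assumed LP feasibility rather than from Lemma~\ref{lemma:RI_contains_Eq}; since your main chain never uses that identity, nothing breaks. Your closed-hyperplane caveat is real only for boundary attractors: for interior ones the $\min$ radius puts $x_{\mathrm{tar}}$ in $\mathcal R_{\mathrm{FI}}^{(i)}\cap\mathcal R_{\mathrm{FI}}^{(i+1)}$, where $y_{i-1}\le 0<y_{i+1}$ and $y_i>0\ge y_{i+2}$, and since (normalizing $w_u=[1,0,-1]^\top$) $(1+\delta)\phi_m^{\mathrm L}=y_{i-1}-y_{i+1}$ and $(1+\delta)\phi_m^{\mathrm R}=y_i-y_{i+2}$, both half-space constraints then hold strictly.
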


\begin{proof}
For $s \in \{\mathrm{L}, \mathrm{R} \}$, consider the attractor \(x^{*,m}\) and the polyhedron \(\mathcal P_m^s=\{x:A_m^s x\le b_m^s\}\). To guarantee \(x_\mathrm{tar}\in\mathcal P_m^s\), it suffices to enforce
\[
a_{m,l}^{s\top}x_\mathrm{tar} \le b_{m,l}^s,
\qquad l=1,\dots,M_m^s.
\]
Substituting \(x_{\mathrm{tar}}=W_E^\top(P^{(m)}+\eta)\) and using Cauchy--Schwarz gives, for any \(\|\eta\|_2\le r_m^s\),
\[
a_{m,\ell}^{s\top}W_E^\top P^{(m)}+\left\|a_{m,\ell}^{s\top}W_E^\top\right\|_2\,r_m^s \le b_{m,\ell}^s,
\qquad \ell=1,\dots,M_m^s.
\]
which yields~\eqref{eq:poly_noise_bound}. Therefore, \(x_{\mathrm{tar}}\in\mathcal P_m^s\).
\end{proof}

\section{Simulations} \label{sec:simulations}
In this section, we present an example of learning and inference, followed by a robustness analysis. Input patterns are drawn from the MNIST dataset~\cite{lecun2002gradient}. We use a 7-dimensional CSTLN with \(\epsilon=0.9\), \(\delta=2\), and \(c=1\). Fig.~\ref{fig:simulation}A shows the trajectory of TLN during learning (dashed lines) and inference (solid lines). The small bend in the solid trajectory marks the moment when control is turned off at the target state and the system returns to the autonomous TLN dynamics. Background colors show the true ROA (computed through sampling). Figure~\ref{fig:simulation}B Shows the corresponding TLN firing rates over time, where each chunk corresponds to an attractor in the TLN. Fig.~\ref{fig:simulation}C shows the successful reconstruction of a noisy input pattern. Next, we evaluate robustness using the two methods described in Theorems~\ref{thm:SDP} and~\ref{thm:LP}. Figure~\ref{fig:simulation}D summarizes the results on 108 encoders and decoders corresponding to different learning sequences, each containing 6 patterns. As shown in Fig.~\ref{fig:simulation}D, the LP method in Theorem~\ref{thm:LP} outperforms the SDP method in Theorem~\ref{thm:SDP} by yielding a larger \(l_2\)-norm noise bound. Figure~\ref{fig:simulation}E further shows that, on the same learning examples, the empirical worst-case noise bound guaranteeing convergence to the correct attractor is approximately twice the bound (value is taken as the median in Figure~\ref{fig:simulation}D) found through LP method. Beyond this level, the TLN may converge to a wrong attractor and retrieve the wrong memory.

\section{Conclusion}
In this work, we presented a novel associative memory system with a TLN latent space, dynamical controller, linear encoder, and linear decoder. We provided ROA and robustness analyses for this system, and validate our results in simulation.
Future work will explore the generalization of this model to hetero-associative memory and motor tasks.

\bibliographystyle{IEEEtran}
\bibliography{ref}
\end{document}